\newtheorem{lemma}{Lemma}[section]
\newtheorem{theorem}[lemma]{Theorem}
\newtheorem{corollary}[lemma]{Corollary}
\def\Twin{\operatorname{\emph{twin}}}
\def\Sign{\operatorname{\emph{sign}}}
\def\Edge{\operatorname{\emph{edge}}}
\def\Cross{\operatorname{\emph{cross}}}
\def\Param{\vec}
\begin{document}

\title{Optimal Curve Straightening is $\exists\Real$-Complete}
\author{Jeff Erickson \\
	\href{http://jeffe.cs.illinois.edu}{University of Illinois, Urbana-Champaign}}
\date{}	
\maketitle

\begin{abstract}
We prove that the following problem has the same computational complexity as the existential theory of the reals: Given a generic self-intersecting closed curve $\gamma$ in the plane and an integer $m$, is there a polygon with $m$ vertices that is isotopic to $\gamma$?  Our reduction implies implies two stronger results, as corollaries of similar results for pseudoline arrangements.  First, there are isotopy classes in which every $m$-gon with integer coordinates requires $2^{\Omega(m)}$ bits of precision.  Second, for any semi-algebraic set $V$, there is an integer $m$ and a closed curve $\gamma$ such that the space of all $m$-gons isotopic to $\gamma$ is homotopy equivalent to~$V$.   
\end{abstract}

\pagestyle{myheadings}
\markboth{Optimally Curve Straightening is $\exists\Real$-Complete}{Jeff Erickson}

\section{Introduction}

The image of any self-intersecting closed curve $\gamma$ in the plane has a natural structure as a (not necessarily simple) $4$-regular plane graph, whose vertices are the points where $\gamma$ intersects itself.  After subdividing if necessary to remove any loops or parallel edges, the classical Wagner-Fáry-Stein-Stojaković theorem implies that there is a combinatorially equivalent plane map whose edges are straight line segments.  Thus, any curve with $n$ self-intersections can be continuously deformed, without ever changing its intersection pattern, into a polygon with at most $O(n)$ vertices.  This type of continuous deformation is called an \emph{isotopy}.

While this $O(n)$ bound is tight in the worst case, it is far from optimal for all curves.  For example, straightening the curve in Figure~\ref{F:example} by finding a piecewise-linear embedding of its image graph yields the decagon on the left, but in fact the curve is isotopic to the self-intersecting hexagon on the right.  As a more extreme example, consider any polygon in which \emph{every} pair of edges intersects, for example, the regular star polygon with Schäfli symbol $\set{m, \floor{m/2}}$ for some odd integer $m$.  Interpreting such a polygon as a curve and then converting that curve back to a polygon by straightening its image graph roughly \emph{squares} the number of vertices.

\begin{figure}[ht]
\centering
\includegraphics[scale=0.5]{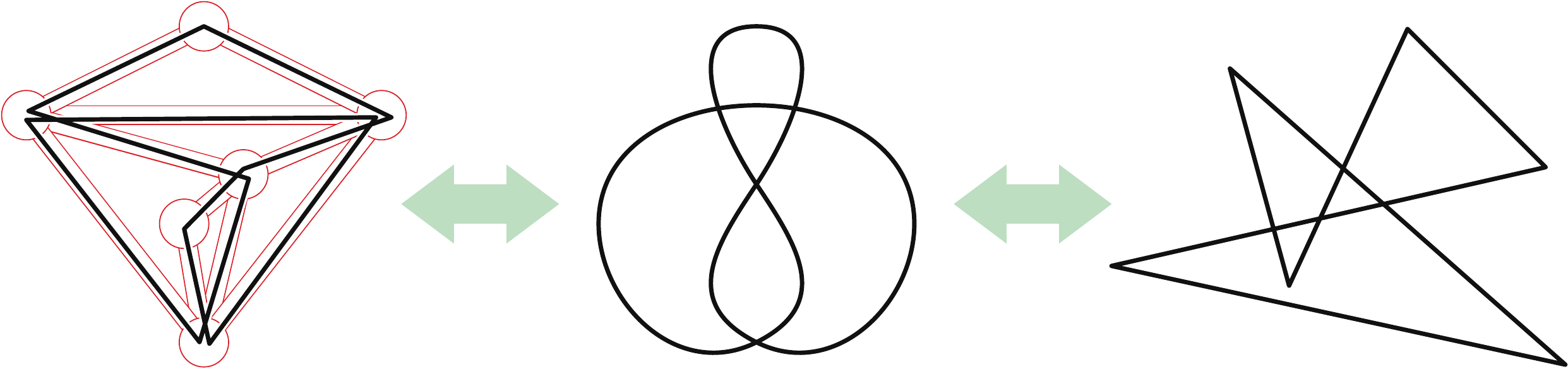}
\caption{Two polygons isotopic to the same closed curve.}
\label{F:example}
\end{figure}

This quadratic gap between best and worst cases motivates the following question: How quickly can we find a polygon with the minimum number of vertices that is isotopic to a given closed curve?  This is a special case of \emph{minimum-segment drawings} problem proposed by Dujmović \etal \cite{desw-dpgws-07}, which asks for the minimum number of line segments whose union is isomorphic, as a straight-line planar graph, to a given plane graph.

The main result of this paper is that deciding whether a given closed curve is isotopic to an \emph{arbitrary} polygon with a given number of vertices has the same computational complexity as \emph{the existential theory of the reals}---the problem of deciding whether a given system of multivariate polynomial equalities and inequalities has a real solution.  In standard computational complexity terms, our result implies that optimal curve-straightening is both NP-hard and in \textsc{PSpace}.

In Section \ref{S:hard}, we prove that curve-straightening is hard using a simple reduction from the stretchability problem for pseudoline arrangements, which was proved equivalent to the existential theory of the reals by Mnëv \cite{m-utcpc-88,m-vctpc-85}.  Figure \ref{F:reduction} shows a typical example of our reduction.  Our argument is similar to, but both simplifies and strengthens, a proof by Durocher \etal \cite{dmnw-nmdpg-13} that computing minimum-segment drawings of plane graphs is $\exists\Real$-hard, which relies on a reduction by Bose \etal~from pseudoline stretchability to recognizing arrangement graphs~\cite{bew-pag-03}. 

\begin{figure}[ht]
\centering\footnotesize\sffamily
\begin{tabular}{c@{\qquad\qquad}c}
\raisebox{-0.5\height}{\includegraphics[scale=0.4]{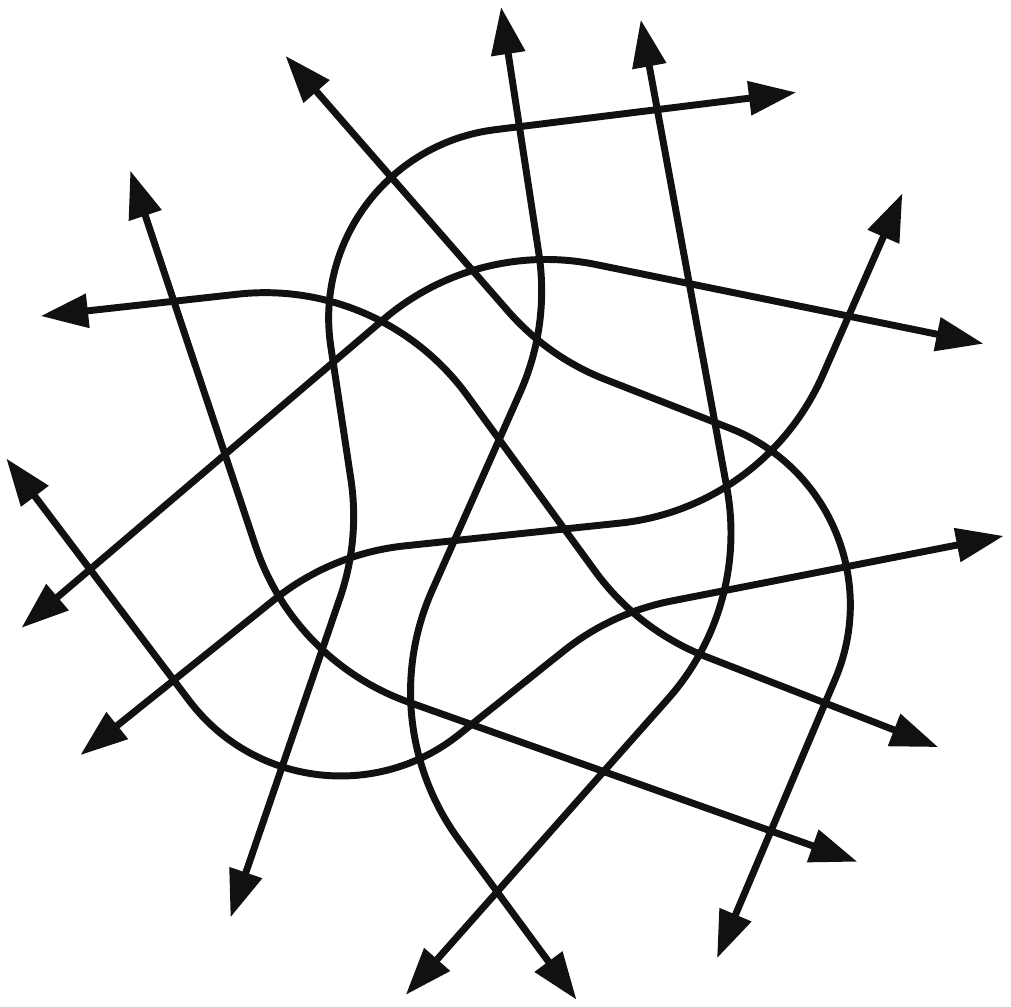}}&
\raisebox{-0.5\height}{\includegraphics[scale=0.4]{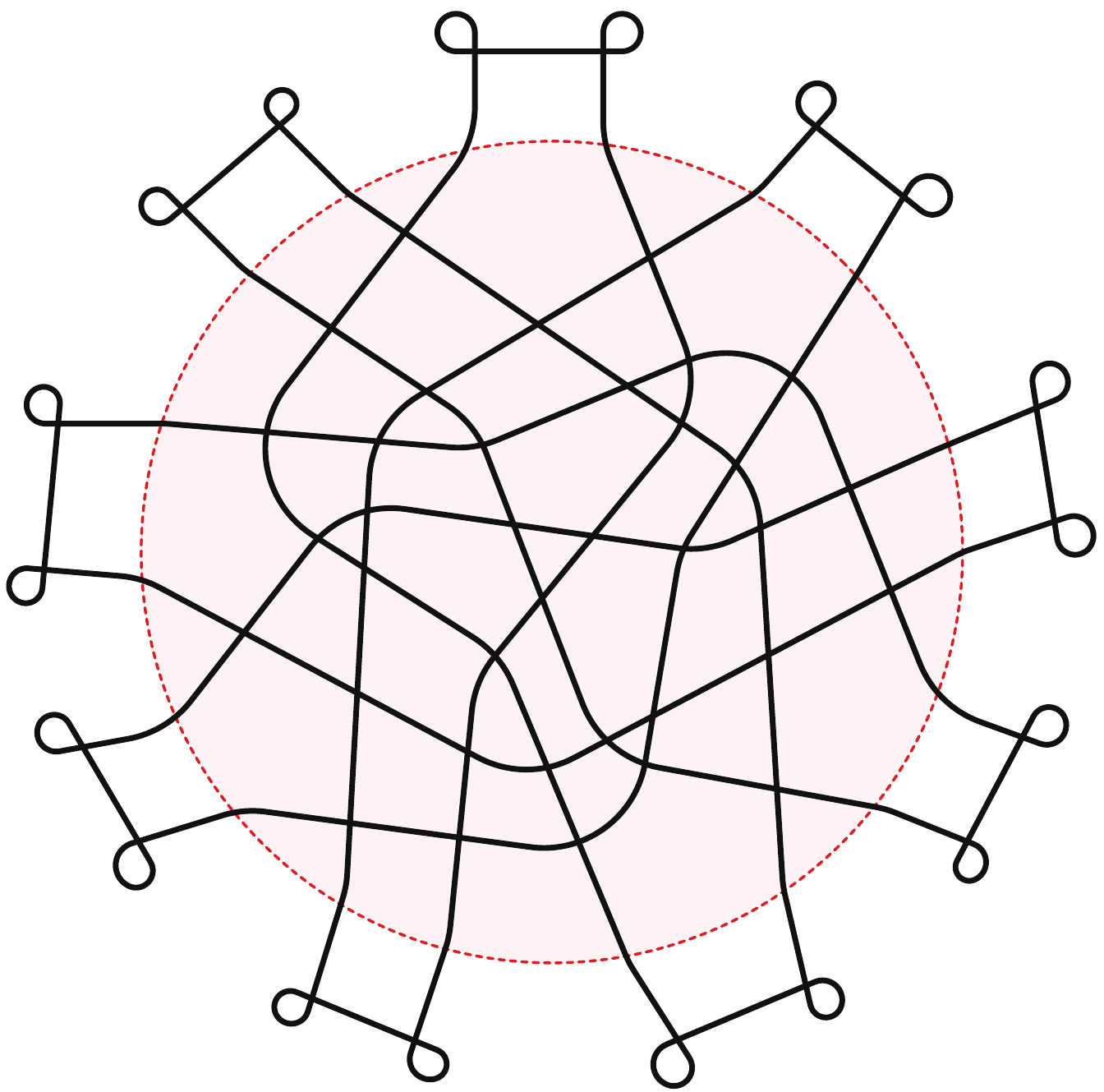}}\\(a)&(b)
\end{tabular}
\caption{(a) Ringel's \cite{r-tedgo-56} minimal non-stretchable arrangement of 9 pseudolines.  (b) A closed curve derived from Ringel's arrangement that is not isotopic to any 36-gon.}
\label{F:reduction}
\end{figure}

Together with Mnëv's universality theorem for pseudoline arrangements \cite{m-utcpc-88,m-vctpc-85}, our reduction implies a similar universality result for polygons: the set of all polygons isotopic to a given curve and with a  given number of vertices can have essentially arbitrary topology.  Similarly, results of Goodman \etal~\cite{gps-iscm-90,gps-crotre-89} imply that an integer polygon isotopic to a given curve could require doubly-exponentially large coordinates in the worst case.

Our reduction \emph{to} the existential theory of the reals in Section \ref{S:easy} is surprisingly more delicate than our hardness proof, for two reasons.  First, we must be much more explicit about how the isotopy class of the input curve is represented; we use a variant of codes proposed by Gauss \cite{g-n1gs-00,g-nglzr-00} and Dowker and Thistlethwaite \cite{dt-ckp-83}, with some additional assumptions to establish an embedding in the plane instead of the sphere.  More significantly, there is no direct correspondence between the vertex coordinates of the polygon and the conditions that guarantee isotopy.  (By contrast, for example, reducing the recognition of segment intersection graphs to ETR requires exactly one clause for each pair of vertices, stating either that the corresponding segments cross or that they do not \cite{km-igs-94,m-igse-14}.)  We exploit the nondeterminism of ETR by guessing which edges of the isotopic polygon contain each self-crossing; using this information essentially requires us to implement integer arrays inside our formula.

Interestingly, given a closed curve $\gamma$ with $n$ self-intersection points, we can compute the minimum-complexity \emph{orthogonal} polygon isotopic to $\gamma$ in $O(n^{4/3}\polylog n)$ time, by combining Tamassia's reduction from minimum-bend grid embeddings to minimum-cost flows \cite{t-eggwm-87} with a recent planar minimum-cost flow algorithm of Karczmarz and Sankowski~\cite{ks-mfupg-19}; see also Cornelson and Karrenbauer \cite{ck-abm-12}.

\section{Background}
\label{S:background}

\subsection{Closed Curves}

A \emph{parametrized closed curve} is a continuous function $\Param\gamma\colon S^1\to \Real^2$ from the circle $S^1 = \Real/\Z = [0,1]/\set{0,1}$ to the plane.  In this paper, we consider only \emph{generic} closed curves, which are injective except at a finite number of transverse double points.  The \emph{basepoint} of a parametrized closed curve $\Param\gamma$ is the point $\Param\gamma(0)$.

Two parametrized closed curves $\Param\gamma$ and $\Param\gamma'$ are \emph{equivalent} if they differ only by a reparametrization, meaning there is a homeomorphism $\theta\colon S^1 \to S^1$ such that $\Param\gamma' = \Param\gamma\circ \theta$.  Equivalent parametrized curves have exactly the same images, but possibly different basepoints and orientations.  A \emph{closed curve} is an equivalence class of parametrized closed curves under this equivalence relation.


An \emph{ambient isotopy} of the plane is a continuous family $H\colon [0,1]\to(\Real^2 \to \Real^2)$ of homeomorphisms of the plane, where without loss of generality $H(0)$ is the identity map.  Two parametrized closed curves~$\Param\gamma$ and $\Param\gamma'$ are \EMPH{isotopic} if there is an ambient isotopy $H$ such that $\Param\gamma' = H(1)\circ\Param\gamma$, and two closed curves are isotopic is they have isotopic parametrizations.  Less formally, two curves are isotopic if there is a continuous deformation of the plane taking (the image of) one to (the image of) the other.

The \emph{image graph} of a closed curve $\gamma$ is the plane graph whose vertices are the curve's self-intersection points, and whose edges are maximal subpaths of the curve between vertices.  Two closed curves are isotopic if and only if their image maps are combinatorially isomorphic (forbidding reflection).  Thus, up to isotopy, closed curves can be represented either by any reasonable data structure for topological planar maps \cite{t-cpm-63,w-ebdss-85,c-age-97,gs-pmgsc-85, mp-fitcp-78}, or by more specialized structures encoding the order and directions of self-intersections along the curve \cite{g-n1gs-00,g-nglzr-00,t-k1-1876,c-cic-91,dt-ckp-83}.  We describe one such encoding in Section~\ref{SS:codes}.

We regard any polygon $P$ with vertices $(p_1, p_2, \dots, p_m)$ as a piecewise-linear parametrized closed curve with basepoint $p_1$ that visits the vertices $p_i$ in increasing index order.  A polygon is \emph{generic} if all vertices are distinct, no vertex lies in the interior of an edge, and no three edges contain a common point.

\begin{lemma}
\label{L:crude-upper}
Every generic closed curve with $n$ self-crossings is isotopic to a generic polygon with $O(n)$ vertices.
\end{lemma}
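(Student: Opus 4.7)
The plan is to realize the image graph of $\gamma$ by straight-line segments using the Wagner-Fáry-Stein-Stojaković theorem recalled in the introduction, after a small amount of edge subdivision.

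First, I would observe that the image graph $G$ of $\gamma$ is a $4$-regular plane graph with $n$ vertices and hence $2n$ edges. To satisfy the simplicity hypothesis of the Wagner-Fáry-Stein-Stojaković theorem, I would subdivide each edge of $G$ with two new vertices, yielding a simple plane graph $G'$ on $5n$ vertices. Two subdivisions per edge suffice: each loop of $G$ becomes a path on four distinct vertices, and each parallel pair becomes two internally disjoint paths sharing only their endpoints. I would then apply the theorem to obtain a combinatorially equivalent plane embedding of $G'$ in which every edge is a straight segment.

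Next, I would follow the Euler tour of $G'$ inherited from $\gamma$ to assemble a closed polygonal curve $P$ from at most $6n$ straight segments. Its image graph is isomorphic, as a plane graph, to $G$, because the Wagner-Fáry-Stein-Stojaković theorem preserves the rotation system: at each original degree-$4$ vertex $v$, the four incident segments appear in the same cyclic order around $v$ as the four arcs of $\gamma$ at that crossing. The combinatorial characterization of isotopy of generic closed curves quoted in Section~\ref{S:background} then implies that $P$ is isotopic to $\gamma$.

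The final step is genericity. As constructed, $P$ visits each self-crossing $v$ of $G$ twice, so $v$ is a repeated vertex of $P$ at which four polygon edges meet, violating all three generic conditions. I would resolve this by a sufficiently small generic perturbation of the polygon vertices, splitting the two copies of $v$ into nearby distinct polygon vertices and moving the actual crossing into the interior of a pair of edges. The main obstacle is checking that such a perturbation can simultaneously satisfy all three generic conditions while preserving the isotopy class; this follows from standard genericity arguments together with the fact that the four segments at $v$ appear in the correct alternating cyclic order, which forces the two strands to continue to cross transversely at a nearby point for any sufficiently small perturbation.
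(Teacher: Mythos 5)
Your proposal is correct and follows essentially the same route as the paper's proof: subdivide the image graph to make it simple, apply the Wagner--F\'ary--Stein--Stojakovi\'c theorem, take the Euler tour that crosses itself at every original vertex, and perturb the resulting polygon to achieve genericity. The extra details you supply (two subdivision points per edge, the rotation-system argument, and the transversality of crossings under small perturbation) are all consistent with the paper's terser version of the same argument.
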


\begin{proof}
Let $\gamma$ be a generic closed curve with $n$ self-crossings.  Subdivide the edges of the image graph of~$\gamma$ as necessary to to obtain a simple plane graph $G$.  Let $\bar{G}$ be a straight-line plane graph that is combinatorially equivalent (and therefore isotopic) to $G$.  Let $P$ be the unique Euler tour of $\bar{G}$ that crosses itself at every vertex; $P$ is a polygon with (crudely) at most $6n$ vertices.  Finally, for sufficiently small~$\e$, uniformly and independently perturbing each vertex of $P$ inside a ball of radius $\e$ yields an isotopic polygon $\tilde{P}$, which is generic with probability $1$. 
\end{proof}

\subsection{Pseudoline Arrangements}

An \EMPH{pseudoline arrangement} is a cellular decomposition of the (projective) plane by a finite set of infinite curves (called \emph{pseudolines}), each of which is isotopic to a straight line, and any pair of which intersect transversely, exactly once.  In particular, every arrangement of (pairwise non-parallel) straight lines is a pseudoline arrangement.  A pseudoline arrangement is \EMPH{simple} if each point in the plane lies on at most two pseudolines.  Two pseudoline arrangements are \EMPH{isomorphic} if they define combinatorially isomorphic decompositions, or equivalently \cite{blswz-om-00} if there is a homeomorphism of the plane to itself that maps one arrangement to the other.

A pseudoline arrangement is \EMPH{stretchable} if it is isomorphic to an arrangement of (pairwise non-parallel) lines.  The first non-stretchable pseudoline arrangements were constructed by Levi \cite{l-tpedg-26}, based on classical theorems of projective geometry; Ringel later refined Levi's construction to obtain the first \emph{simple} non-stretchable pseudoline arrangements~\cite{r-tedgo-56}.  Figure~\ref{F:reduction}(a) shows one of Ringel's arrangements, based on Pappus's hexagon theorem.\footnote{This is in fact the smallest simple non-stretchable pseudoline arrangement.  Every arrangement of at most eight pseudolines is stretchable \cite{gp-pgcsc-80}, and up to isomorphism, there is exactly one simple non-stretchable arrangement of nine pseudolines \cite{r-krom-89}.}

For further background on pseudoline arrangements and stretchability, see Felsner and Goodman \cite{fg-pa-17}, Björner \etal~\cite{blswz-om-00}, Bokowski and Sturmfels \cite{bs-csg-89}, or Grünbaum \cite{g-as-72}.

\subsection{The Existential Theory of the Reals}

The \emph{existential theory of the reals} (ETR) is the set of all true existentially quantified boolean combinations of multivariate polynomial inequalities.  More formally, we can recursively define a (fully parenthesized) \emph{sentence} in the existential theory of the reals (in \emph{prenex form}) as follows:
\begin{itemize}
\item
A \emph{polynomial} is either the constant $0$, the constant $1$, a variable $x_i$, or an expression of the form $(Q+Q')$ or $(Q-Q')$ or $(Q \cdot Q')$ for some polynomials $Q$ and $Q'$.
\item
An \emph{atomic predicate} is a comparison of the form $(Q=0)$ or $(Q<0)$ for some polynomial $Q$.
\item
A \emph{predicate} is either an atomic predicate or an expression of the form $(\lnot \Phi)$ or $(\Phi\lor \Phi')$ or $(\Phi\land \Phi')$ for some predicates $\Phi$ and $\Phi'$.  The set of all real vectors $(x_1, x_2, \dots, x_N)\in\Real^N$ that satisfy such a predicate is called a \emph{semi-algebraic set}.
\item
Finally, a \emph{sentence} is an existentially quantified formula $\exists x_1, x_2, \dots, x_N \colon \Phi(x_1, x_2, \dots, x_N)$, where $\Phi(x_1, x_2, \dots, x_N)$ is any predicate in which the only variables are $x_1, x_2, \dots, x_N$.
\end{itemize}
ETR is the set of all such sentences that are true when all variables are quantified over the set of real numbers.  Equivalently, ETR is the problem of deciding whether a given predicate describes a non-empty semi-algebraic set.

The precise computational complexity of ETR is unknown.  The problem is easily seen to be NP-hard (see \cite[p.~513]{bpr-arag-06}, for example), and Canny proved that $\text{ETR}\in\textsc{PSpace}$ \cite{c-sagcp-88}.  The fastest algorithm known to solve ETR, due to Renegar \cite{r-ccgft1-92}, runs in time $(nd)^{O(N)}$, where $n$ is the number of atomic predicates in the input sentence,~$d$~is the maximum degree of any polynomial, and $N$ is the number of distinct real variables; see also Basu~\etal~\cite{bpr-arag-06}.

The definition of ETR is actually quite flexible.  We can freely include arbitrary bounded-degree algebraic real constants, subtraction, division, square roots, absolute values, more general comparisons ({$\le$, $\ne$, $>$,~$\ge$}), and more general boolean operators ($\Rightarrow$, $\Leftrightarrow$, $\oplus$, etc.); these can all be eliminated with only a small increase in the length of the sentence.  On the other hand, ETR remains $\exists\Real$-complete (see below) if we allow only strict inequalities, or only a single equation $\exists\vec{x}: Q(\vec{x}) = 0$, or only conjunctions of atomic predicates of the form $x_i<x_j$, $x_i+x_j=x_k$, and $x_i\cdot x_j = x_k$ \cite{s-spn-91}.

\subsection{$\exists\Real$-hardness and Universality}

Schaefer \cite{s-csgtp-09, s-rgl-13} and Schaefer and Štefankovič \cite{ss-fpnee-17} defined the complexity class $\exists\Real$ as the set of all decision problems that can be (many-one) reduced to ETR in polynomial time.  A problem is \emph{$\exists\Real$-hard} if there is a polynomial-time (many-one) reduction from ETR to that problem, and a decision problem is \emph{$\exists\Real$-complete} if it is both in $\exists\Real$ and $\exists\Real$-hard.\footnote{$\exists\Real$ is also equal to the Blum-Shub-Smale complexity class $\set{0,1}^*\cap \text{NP}^0_{\Real}$, the set of boolean vectors recognizable in polynomial time by a non-deterministic BSS machine over the reals \cite{bcss-crc-98} whose only constants are $0$ and $1$~\cite{bc-cccnc2-06,s-csgtp-09}.}

A seminal result of Mnëv \cite{m-utcpc-88,m-vctpc-85} implies that deciding whether a given simple pseudoline arrangement is stretchable is $\exists\Real$-complete; indeed, this was the first problem (other than ETR itself) to be identified as $\exists\Real$-complete.  In fact, Mnëv proved a significantly stronger \emph{universality} result.  The \emph{realization space} of an arrangement of $n$ pseudolines is the set of all isomorphic line arrangements, each encoded as a vector in~$\Real^{2n}$.  Mnëv proved that for any (open) semialgebraic set~$V$, there is a (simple) pseudoline arrangement whose realization space is homotopy equivalent to $V$.  Mnëv's proof relies on von Staudt's \emph{algebra of throws} \cite{vs-bg-1847}, which implements arithmetic using projective geometric constructions.  More detailed proofs of Mnëv's results were later given by Shor \cite{s-spn-91} and Richter-Gebert \cite{r-mutr-95,r-rsp-96}.

Mnëv's universality theorem has since been used to prove the $\exists\Real$-hardness of  several other geometric and topological problems.  Examples include
realizability of abstract 4-polytopes \cite{r-rsp-96},
cross-product systems \cite{hsz-scptc-13},
Delaunay triangulations~\cite{pt-dtwdr-14,apt-utipd-15},
and planar linkages \cite{km-utcspl-02,s-rgl-13};
recognition of several types of graphs \cite{bew-pag-03,km-igs-94,m-igse-14,mm-irdsg-13,s-rgl-13,ch-rcpvg-17,cfmtv-igrgs-18};
several problems in graph drawing \cite{dmnw-nmdpg-13,h-cpsnp-17,cflrvw-cdgfl-17,ck-csgge-15,cfmtv-igrgs-18,b-sphcn-91,s-csgtp-09};
several problems related to fixed points and Nash equilibria~\cite{ss-fpnee-17,bm-edpas-17,bm-cedpa-16,gmvy-edvms-18};
problems in social choice theory \cite{p-rmep-17};
optimal polytope nesting \cite{dhm-utnp-19,s-nrmhp-17};
nonnegative matrix factorization \cite{cr-nrdfn-93,dhm-utnp-19,s-nrmhp-17};
minimum rank of a matrix with a given sign pattern \cite{bfgjk-vcfo-09,bk-ccmrs-15};
real tensor rank \cite{ss-ctr-18};
and the art gallery problem \cite{aam-agpc-18}.
(Some of these papers claim only NP-hardness but describe reductions from pseudoline stretchability that imply $\exists\Real$-hardness; others make no claims about computational complexity but derive appropriate universality theorems that imply $\exists\Real$-hardness.\footnote{Mnëv's universality theorem also implies several other universality results in algebraic geometry, collectively dubbed “Murphy's Law” by Vakim \cite{v-magbb-06,lv-mus-13}.})

For more examples and further background on $\exists\Real$-hardness, we refer the reader to surveys by Schaefer \cite{s-csgtp-09}, Matoušek \cite{m-igse-14}, and Cardinal \cite{c-cgc62-15}.

\section{Straightening Curves is Hard}
\label{S:hard}

We consider the decision problem \textsc{CurveToPolygon}, which asks, given a nonnegative integer $m$ and (any reasonable representation of) a generic closed curve $\gamma$ in the plane, whether any polygon with~$m$ vertices is isotopic to $\gamma$.

To prove that \textsc{CurveToPolygon} is $\exists\Real$-hard, we describe a straightforward polynomial-time reduction from deciding stretchability of simple pseudoline arrangements.  Given a simple arrangement of $n$ pseudolines $\Psi = \set{\psi_1, \psi_2, \dots, \psi_n}$, we construct a corresponding generic closed curve $\gamma_\Psi$ that is isotopic to a $4n$-gon if and only if $\Psi$ is stretchable.  Figure \ref{F:example} gives a pictorial summary of our reduction.

We assume without loss of generality that the curves in $\Psi$ constitute a \emph{wiring diagram}, as defined by Goodman~\cite{g-pcbgs-80}.  That is, the curves are piecewise-linear, and each curve is horizontal except in a small neighborhood of each crossing.  We assume the pseudolines are indexed from top to bottom to the left of all crossings, and therefore from bottom to top to the right of all crossings, as shown in Figure \ref{F:pseudolines}.

\begin{figure}[ht]
\centering
\includegraphics[scale=0.5]{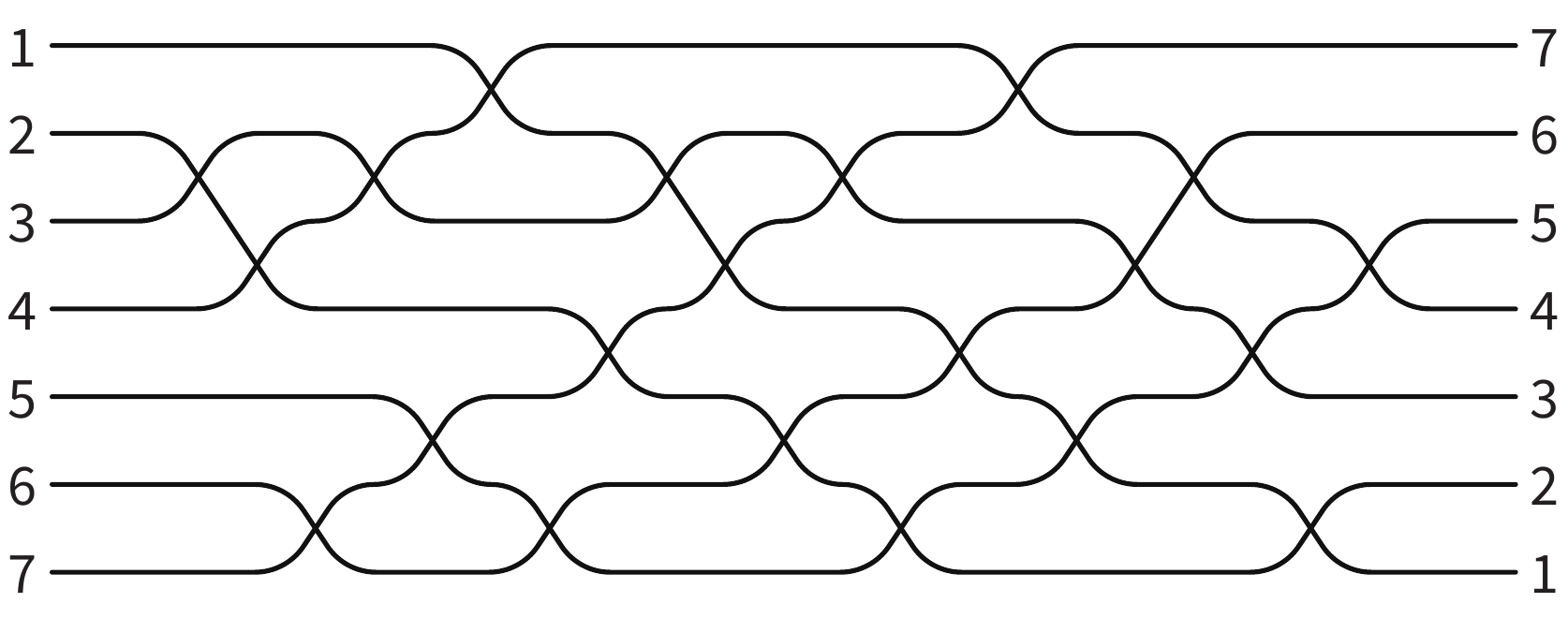}
\caption{A “wiring diagram” pseudoline arrangement.}
\label{F:pseudolines}
\end{figure}

To avoid parity issues, we also assume without loss of generality that $n$ is odd.  Otherwise, consider the set $\Psi' = \Psi \cup \set{\psi_0}$, where $\psi_0$ is a new “wire” that passes above all $\binom{n}{2}$ intersection points of $\Psi$ and then downward across all $n$ wires in $\Psi$ to the right of those intersection points.  We easily observe that the arrangement of $\Psi'$ is simple, and that $\Psi'$ is stretchable if and only if $\Psi$ is stretchable.

We derive our closed curve $\gamma_\Psi$ from the wiring diagram $\Psi$ as follows.  Truncate the pseudolines to an axis-aligned rectangle $R$ that contains all crossings in its interior.  For each index~$i$, let~$u_i$ and~$v_i$ denote the left and right endpoints of the curves $\Psi_i\cap R$.  We connect the truncated pseudolines using~$n$ pairwise-disjoint paths outside $R$, each containing two simple outward-facing loops and no other self-intersections, and each connecting two adjacent endpoints on the boundary of $R$, as shown in Figure~\ref{F:braidcurve}.  Specifically, for each index $i$, we connect the left endpoints $u_{2i}$ and~$u_{2i+1}$ to the left of $R$, and we connect the right endpoints $v_{2i-1}$ and~$v_{2i}$ to the right of $R$; finally, we connect the top endpoints $u_1$ and $v_{2n}$ above~$R$.  Concatenating these $n$ exterior paths and the $n$ truncated pseudolines yields the closed curve~$\gamma_\Psi$.  We collectively refer to the $2n$ exterior loops as the \emph{fringe} of $\gamma_\Psi$.

\begin{figure}[ht]
\centering
\includegraphics[scale=0.5]{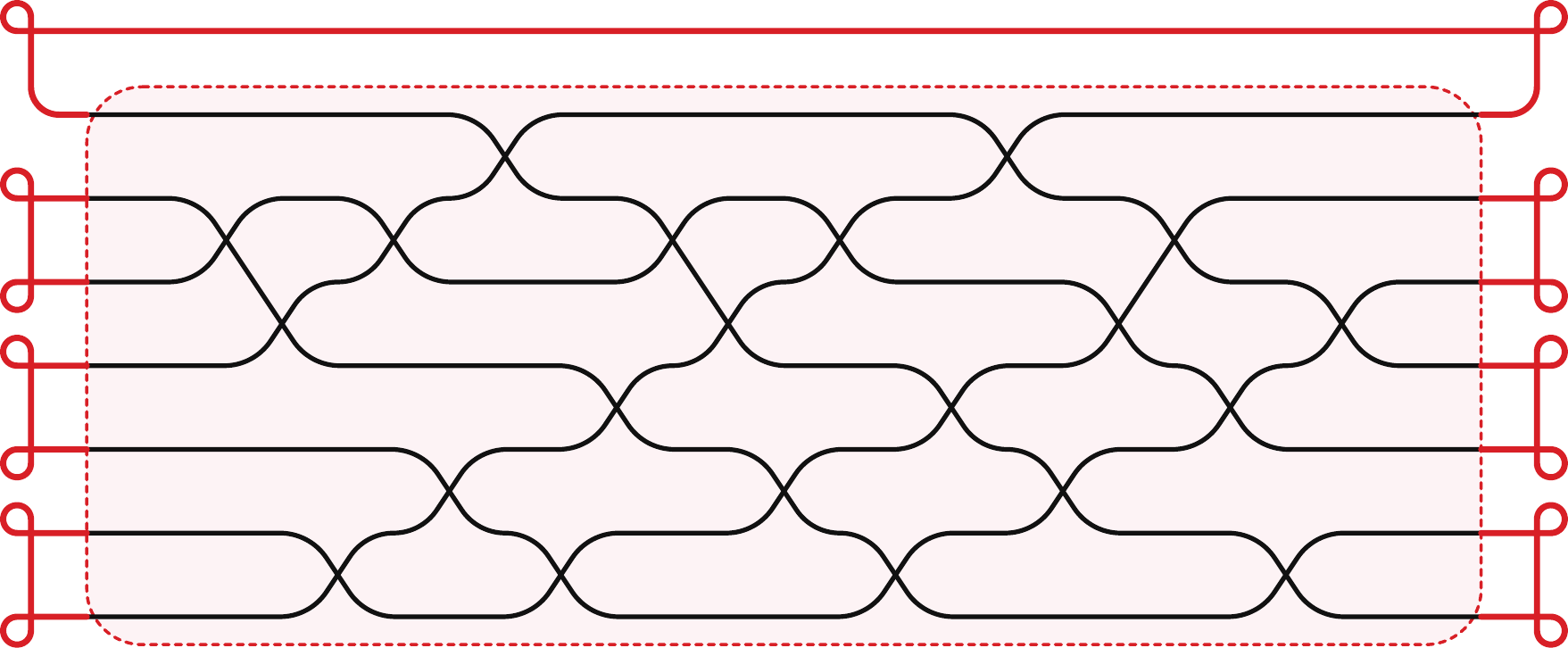}
\caption{The closed curve $\gamma_\Psi$ derived from the pseudoline arrangement $\Psi$ in Figure \ref{F:pseudolines}.}
\label{F:braidcurve}
\end{figure}

\begin{lemma}
\label{L:fringe}
Let $\Psi$ be any set of $n$ pseudolines, for any odd integer $n$, whose arrangement is simple.  Any polygon isotopic to the closed curve $\gamma_\Psi$ has at least $4n$ vertices.
\end{lemma}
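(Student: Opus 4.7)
The key observation is that the $2n$ loops in the fringe of $\gamma_\Psi$ are pairwise-disjoint simple sub-loops of the curve, each attached to its own self-crossing; my plan is to show that each such sub-loop forces at least two vertices in any isotopic polygon, so that the $2n$ loops together force at least $4n$ vertices.

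\emph{Setup.} Let $P = (p_1, \dots, p_m)$ be any generic polygon isotopic to $\gamma_\Psi$. Because ambient isotopy preserves the combinatorial structure of the image graph (as noted earlier in the paper), the self-crossings of $P$ are in bijection with those of $\gamma_\Psi$, and the $2n$ fringe sub-loops of $\gamma_\Psi$ correspond to $2n$ pairwise-disjoint sub-arcs of $P$, each of which begins and ends at the same self-crossing point of $P$ (on two different edges) and otherwise meets no other portion of $P$.

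\emph{Vertex lower bound per loop.} Fix one such sub-arc, based at a self-crossing $p$ of $P$ lying on two polygon edges $e_i = p_i p_{i+1}$ and $e_j = p_j p_{j+1}$ with $i<j$. The loop sub-arc traverses a partial segment of $e_i$ from $p$ to $p_{i+1}$, the full edges $e_{i+1},\dots,e_{j-1}$, and a partial segment of $e_j$ from $p_j$ back to $p$; in particular, it contains the $j-i$ polygon vertices $p_{i+1},\dots,p_j$. I claim $j - i \ge 2$: the case $j-i=1$ would force $e_i$ and $e_j$ to be consecutive edges meeting at the shared endpoint $p_{i+1}$, and a second intersection point $p\ne p_{i+1}$ common to both segments would force them to be collinear, violating genericity of $P$. (The case $j-i=0$ is absurd, since a line segment cannot cross itself.) Hence each fringe loop contains at least two polygon vertices.

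\emph{Aggregation.} Since the $2n$ loop sub-arcs of $P$ are pairwise disjoint arcs of the polygon, the interior vertex sets they contain are also pairwise disjoint, and summing yields $m \ge 4n$. The only step that requires mild care is the translation from ``disjoint sub-loops of $\gamma_\Psi$'' to ``disjoint sub-arcs of $P$,'' but this is immediate from the fact that the image-graph combinatorics, together with the cyclic order of edges along the curve, are invariants of isotopy; I expect the rest of the argument to be essentially bookkeeping.
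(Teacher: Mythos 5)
Your proposal is correct and follows essentially the same argument as the paper: each fringe loop corresponds to a simple sub-loop of $P$ based at a crossing of two polygon edges, those two edges cannot be adjacent (the paper phrases this as ``because they cross, they do not share an endpoint,'' you phrase it as $j-i\ge 2$ by genericity), so each loop contains at least two vertices, and the $2n$ disjoint loops give $4n$ vertices in total. Your index-level bookkeeping is a more explicit rendering of the same idea, with no substantive difference.
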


\begin{proof}
Let $P$ be any polygon isotopic to $\gamma_\Psi$, and let $H\colon \Real^2\to\Real^2$ be any homeomorphism such that $P = H\circ \gamma_\Psi$.  Consider any simple loop $\ell$ in the fringe of $\gamma_\Psi$.  The corresponding path $L = H\circ\ell$ is a simple loop of $P$, which starts and ends at the unique intersection point of two edges $e$ and $e'$ of $P$.  These two edges are distinct, and because they cross, they do not share an endpoint.  Thus, $L$ contains at least two vertices of $P$, specifically, one endpoint of $e$ and one endpoint of $e'$.

\begin{figure}[ht]
\centering
\includegraphics[scale=0.5]{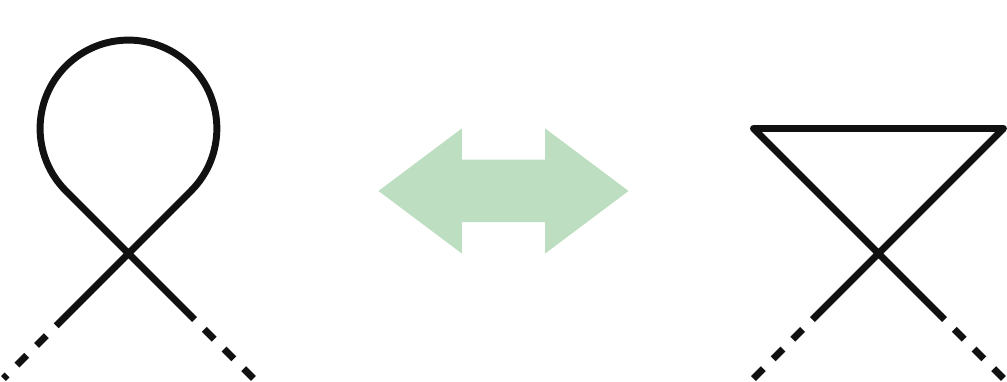}
\caption{Polygonalizing a simple loop requires at least two vertices}
\end{figure}

It follows immediately that $P$ contains at least two vertices for each of the $2n$ loops in the fringe of~$\gamma_\Psi$, and thus at least $4n$ vertices in total.  
\end{proof}

\begin{lemma}
\label{L:reduction}
Let $\Psi$ be any set of $n$ pseudolines, for any odd integer $n$, whose arrangement is simple.  The curve $\gamma_\Psi$ is isotopic to a $4n$-gon if and only if the arrangement of $\Psi$ is stretchable.
\end{lemma}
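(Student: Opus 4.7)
My plan is to prove the biconditional in both directions, using Lemma~\ref{L:fringe} as the backbone of the forward implication.

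For the forward direction, suppose $P$ is a $4n$-gon isotopic to $\gamma_\Psi$, and let $H$ be a planar homeomorphism with $P = H\circ\gamma_\Psi$. The $2n$ outward-facing loops in the fringe of $\gamma_\Psi$ are pairwise disjoint subcurves of $\gamma_\Psi$, so their images under $H$ are pairwise disjoint subcurves of $P$; applying the ``at least two vertices per loop'' bound of Lemma~\ref{L:fringe} to each of these disjoint subcurves therefore yields $4n$ \emph{distinct} polygon vertices. Since $P$ has exactly $4n$ vertices, every polygon vertex must lie on some fringe loop, hence outside $H(R)$. Inside $H(R)$, each of the $n$ pseudoline arcs of $\gamma_\Psi$ is therefore contained in a single edge of $P$, i.e., in a straight line segment. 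Extending each of these $n$ segments to a full line yields an arrangement of $n$ straight lines with the same combinatorial structure as $\Psi$, so $\Psi$ is stretchable.

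For the converse direction, I begin with a straight-line arrangement $L = \{\ell_1, \ldots, \ell_n\}$ combinatorially isomorphic to $\Psi$, realized in wiring-diagram form so that all $\binom{n}{2}$ crossings lie inside a rectangle $R'$. The $4n$-gon $P$ uses each line $\ell_i$ as one polygon edge, extending slightly past $R'$ on each side; this contributes $n$ edges and $2n$ ``line-end'' polygon vertices. For each of the $n$ fringe pieces of $\gamma_\Psi$, which connects two adjacent line-end vertices in one of the three fringe regions, I add two new interior polygon vertices $V_1, V_2$ together with three fringe edges that join them to the two adjacent line-end vertices in the obvious order. The essential geometric step is to place $V_1$ and $V_2$ so that the middle fringe edge $V_1V_2$ crosses \emph{both} adjacent extended line edges transversely, once each; these two crossings become the two self-intersections of the fringe piece.

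Each of the two crossings bounds a small triangular region---one side on an extended line edge, one side on $V_1V_2$, and one side on one of the two outer fringe edges---containing exactly one line-end polygon vertex and one interior polygon vertex, and so realizing a simple outward-facing loop with the minimum $2$ vertices permitted by Lemma~\ref{L:fringe}. Summed over all $n$ fringe pieces, the polygon has $\binom{n}{2}$ crossings inside $R'$ and $2n$ triangular loops outside, matching the self-intersection structure of $\gamma_\Psi$; a small generic perturbation of the $2n$ interior vertices makes $P$ generic. The main obstacle will be verifying the isotopy $P \simeq \gamma_\Psi$: I must place the interior fringe vertices so that each triangular loop is outward-facing (extending away from $R'$), so that the $n$ fringe pieces are pairwise disjoint and do not interfere with unrelated line edges, and so that the cyclic order of self-intersections along $P$ agrees with that along $\gamma_\Psi$. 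I would arrange this by placing all interior fringe vertices in narrow, pairwise-disjoint neighborhoods immediately adjacent to $R'$ and using the wiring-diagram normal form to determine on which side of each extended line edge each interior vertex should lie.
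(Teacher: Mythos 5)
Your proof follows essentially the same route as the paper in both directions: the forward implication uses the Lemma~\ref{L:fringe} counting argument to force all $4n$ vertices onto the fringe loops, so that the pseudoline arcs inside $H(R)$ become straight segments that extend to an isomorphic line arrangement; and your two-interior-vertex, three-edge fringe gadget whose middle edge crosses both adjacent line edges is exactly the paper's ``staple'' construction. The remaining verification you flag (disjointness of the gadgets and correctness of the isotopy for a suitable normalization of the line arrangement) is handled in the paper at a comparable level of detail, by normalizing slopes and crossing locations and taking the staple parameter $\e$ sufficiently small.
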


\begin{proof}
First, suppose $P$ is a $4n$-gon isotopic to $\gamma_\Psi$.  Recall that $R$ is an axis-aligned rectangle containing all the vertices of the arrangement of $\Psi$ in its interior.  Let $H\colon \Real^2\to\Real^2$ be any homeomorphism such that $P = H\circ \gamma_\Psi$, and let $D = H(R)$ be the image of $R$ under this homeomorphism.  The proof of Lemma~\ref{L:fringe} implies that $D$ does not contain any vertex of $P$.  Thus, the intersection $P\cap D$ is an arrangement of $n$ line segments, each pair of which crosses exactly once.  By construction, the homeomorphism $H$ maps the arrangement of arcs $\Psi \cap B^2$ to the arrangement of segments $P\cap D$.  Extending these segments to lines does not introduce any new intersection points or change the order of intersections along any curve.  Thus, the resulting line arrangement is isomorphic to the arrangement of $\Psi$; in short, $\Psi$ is stretchable.

\medskip
On the other hand, suppose $L$ is a set of lines in the plane whose arrangement is isomorphic to the arrangement of $\Psi$.  Let $H\colon \Real^2\to\Real^2$ be any homeomorphism that maps the arrangement of $\Psi$ to the arrangement of $L$, and for each index $i$, let $l_i$ denote the line $H\circ \psi_i$.  Without loss of generality, the lines in $L$ are indexed by increasing slope, each line in $L$ has slope strictly between $-1$ and $1$, each pair of lines in $L$ intersect strictly between the vertical lines $x=1$ and $x=1$.

For each index $i$, let $p_i$ and $q_i$ denote the intersection of $l_i$ with the lines $x=-5/4$ and $x=5/4$.  We can construct a $4n$-gon $P_L$ isotopic to $\gamma_\Psi$ by connecting the segments $p_iq_i$ with gadgets we call \emph{staples}, each consisting of a segment of length $\e$ and slope $1$, followed by a horizontal or vertical segment, followed by a segment of length $\e$ and slope $-1$, for some suitable real $\e>0$ to be chosen later.  Specifically, we connect $p_1$ to $q_n$ with a horizontal staple, and for each index $i$, we connect the left endpoints $p_{2i-1}$ and $p_i$ and the right endpoints $q_{2i}$ and $q_{2i+1}$ with vertical staples, as shown in Figure \ref{F:staples}.  Each staple crosses the two lines that it connects, creating two simple loops.  For sufficiently small $\e$, the staples are pairwise disjoint, and so the resulting $4n$-gon $P_L$ is isotopic to $\gamma_\Psi$.
\end{proof}

\begin{figure}[ht]
\centering
\includegraphics[scale=0.5]{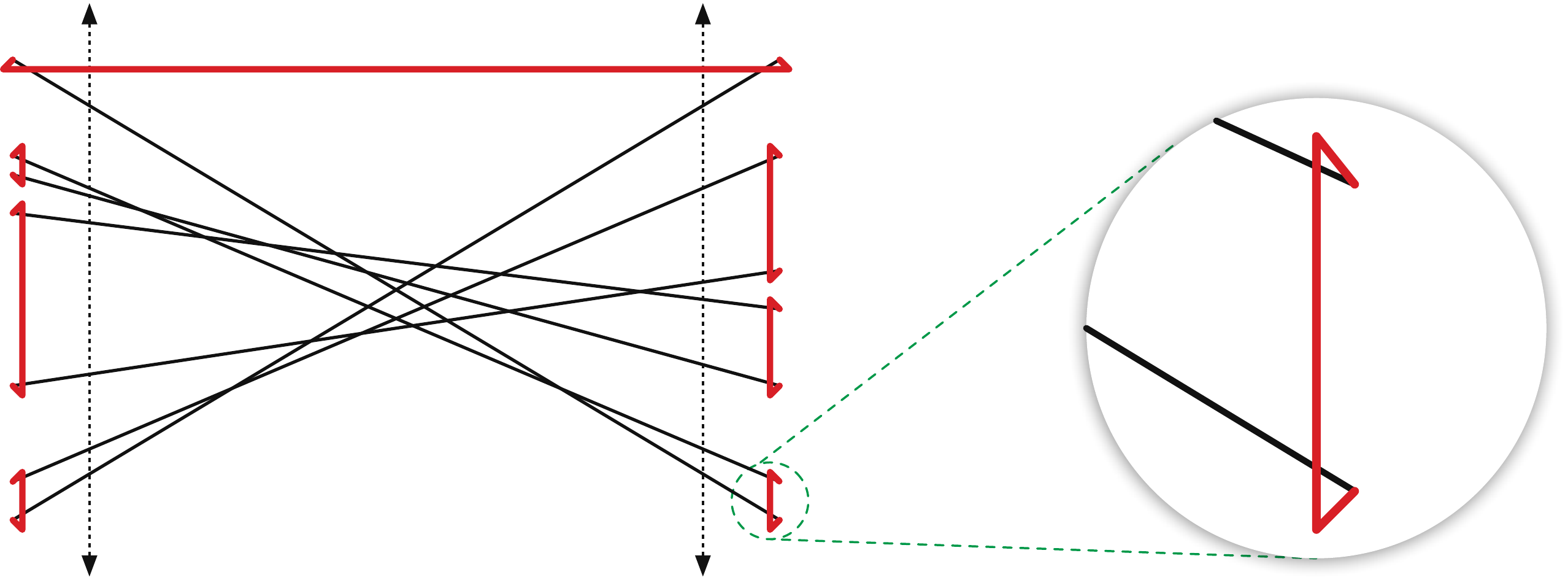}
\caption{Stapling a line arrangement into a polygon; compare with Figure \ref{F:braidcurve}.}
\label{F:staples}
\end{figure}

Given any set $\Psi$ of pseudolines, we can easily construct the curve $\gamma_\Psi$ in polynomial time.  Thus, Lemma \ref{L:reduction} immediately implies the following:

\begin{theorem}
\textsc{CurveToPolygon} is $\exists\Real$-hard.
\end{theorem}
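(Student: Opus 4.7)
The plan is to assemble the hardness reduction directly from the pieces already in hand. Mnëv's theorem \cite{m-utcpc-88,m-vctpc-85} asserts that deciding whether a given simple pseudoline arrangement is stretchable is $\exists\Real$-hard, so it suffices to exhibit a polynomial-time many-one reduction from simple pseudoline stretchability to \textsc{CurveToPolygon}. Given a simple arrangement $\Psi$ of pseudolines, I will output the pair $(\gamma_\Psi,\, 4n)$, where $n$ is the (possibly enlarged) number of pseudolines and $\gamma_\Psi$ is the closed curve built in this section; correctness then follows immediately from Lemma~\ref{L:reduction}.

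Concretely, first I would preprocess $\Psi$ into the form required by the construction. If the input is not presented as a wiring diagram, convert it in polynomial time using Goodman's construction \cite{g-pcbgs-80}; this does not change the isomorphism type of the arrangement, hence preserves stretchability. If the number of pseudolines is even, augment the diagram with the auxiliary pseudoline $\psi_0$ described just before Lemma~\ref{L:fringe}; the resulting arrangement $\Psi'$ is simple and stretchable if and only if $\Psi$ is, and now has an odd number of wires, so Lemma~\ref{L:reduction} applies. Next, build the truncating rectangle $R$, cap the truncated pseudolines with the specified fringe of $2n$ outward loops, and output a standard combinatorial representation of the resulting closed curve $\gamma_\Psi$ along with the target vertex count $4n$.

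The correctness argument is then a one-line appeal to Lemma~\ref{L:reduction}: $\gamma_\Psi$ is isotopic to some $4n$-gon exactly when the (odd-sized) arrangement is stretchable, which in turn is equivalent to stretchability of the original input. For the polynomial-time bound, observe that the wiring diagram has $O(n^2)$ complexity, the parity fix adds $O(n^2)$ additional crossings, and the fringe contributes $O(n)$ extra crossings and loops, so $\gamma_\Psi$ has polynomially many self-intersections and can be written down in polynomial time in any reasonable encoding (for instance, as a topological planar map, or via the Gauss-style code developed in Section~\ref{SS:codes}).

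I do not expect any real obstacle: Lemmas \ref{L:fringe} and \ref{L:reduction} already do all of the geometric work, and Mnëv's theorem supplies the source of hardness. The only place that needs even minor care is confirming that the two preprocessing steps (wiring-diagram conversion and the parity fix) preserve stretchability and remain polynomial, but both are standard and already justified in the text preceding Lemma~\ref{L:fringe}. With these observations in place, the theorem follows at once.
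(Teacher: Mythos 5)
Your proposal is correct and matches the paper's argument: the paper likewise derives the theorem directly from Lemma~\ref{L:reduction} together with the observation that $\gamma_\Psi$ can be constructed in polynomial time, with the wiring-diagram and odd-parity preprocessing handled exactly as you describe. No gaps.
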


In fact, our reduction implies much stronger result, as a natural corollary of Mnëv's universality theorem for pseudoline arrangements \cite{m-utcpc-88,m-vctpc-85,s-spn-91}.  The \emph{coordinate vector} of a polygon $P$ is a vector $(x_1, y_1, x_2, y_2, \dots, x_m, y_m)\in \Real^{2m}$, where $(x_1,y_1)$, $(x_2,y_2)$, \dots, $(x_m, y_m)$ are the coordinates of $P$'s vertices in cyclic order.  Two polygons (or their coordinate vectors) are \emph{equivalent} if they are equivalent as closed curves, or equivalently, if they differ only by a rotation or reflection of vertex indices.  Thus, every generic $m$-gon is equivalent to exactly $2m-1$ other $m$-gons.  Finally, for any closed curve $\gamma$ and any integer~$m$, we define the \emph{isotopy realization space} $\Sigma(\gamma,m) \subseteq \Real^{2m}/D_m$ to be the set of all equivalence classes of coordinate vectors of $m$-gons that are isotopic to $\gamma$.  (Here $D_m$ is the dihedral group of permutations acting on the $m$ vertex indices.)

\begin{theorem}
Every semialgebraic set is homotopy equivalent to the isotopy realization space $\Sigma(\gamma,m)$ of some closed curve $\gamma$ and some integer $m$.
Moreover, every \textbf{open} semialgebraic set is homotopy equivalent to the isotopy realization space $\Sigma(\gamma,m)$ of some \textbf{generic} closed curve $\gamma$ and some integer $m$.
\end{theorem}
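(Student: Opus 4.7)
The plan is to combine Mnëv's universality theorem with a refined analysis of the reduction from Section \ref{S:hard}. Given a semialgebraic set $V$, Mnëv produces a pseudoline arrangement $\Psi$ on $n$ pseudolines whose realization space $R(\Psi) \subseteq \Real^{2n}$ is homotopy equivalent to $V$; when $V$ is open, $\Psi$ can be chosen simple, in which case the derived closed curve $\gamma_\Psi$ from Section \ref{S:hard} is generic. Thus it suffices to prove $\Sigma(\gamma_\Psi, 4n) \simeq R(\Psi)$ for every such $\Psi$.

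I would build this equivalence through two continuous maps. The \emph{forward} map $\Phi : R(\Psi) \to \Sigma(\gamma_\Psi, 4n)$ is a parametrized version of the construction in the second half of Lemma \ref{L:reduction}: for each line realization $L$, output the $4n$-gon $P_L$, with the staple size $\e(L) > 0$ chosen as an explicit continuous function of $L$ (for instance, a fixed small fraction of the minimum pairwise distance between the $2n$ segment endpoints $p_i$ and $q_i$). The \emph{reverse} map $\Theta : \Sigma(\gamma_\Psi, 4n) \to R(\Psi)$ sends a $4n$-gon $P$ to the line arrangement obtained by identifying the $n$ ``central'' edges of $P$ (those corresponding under the isotopy to the truncated pseudolines) and extending each to its supporting line. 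The first half of the proof of Lemma \ref{L:reduction} already verifies that this arrangement realizes $\Psi$.

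By construction $\Theta \circ \Phi = \mathrm{id}_{R(\Psi)}$, so the remaining task is to show $\Phi \circ \Theta$ is homotopic to the identity on $\Sigma(\gamma_\Psi, 4n)$. Equivalently, $\Theta$ should be a (quasi-)fibration with contractible fibres. The fibre $\Theta^{-1}(L)$ is the space $S(L)$ of all admissible staple configurations over $L$: each of the $n$ staples is described by a constant number of real parameters (its two bend vertices), and the conditions that assemble them into a valid $4n$-gon of the correct isotopy class are open semialgebraic inequalities --- pairwise disjointness of the staples, non-degeneracy, and absence of spurious crossings with the central segments or with each other. The argument would contract each $S(L)$ onto $\Phi(L)$ in two stages that vary continuously with $L$: first shrink every staple radially toward its attachment points until all staples are small enough to be mutually disjoint and to avoid the other central segments, then linearly interpolate the shrunken bend vertices to the canonical bend vertices of $\Phi(L)$. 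Each stage preserves the relevant strict inequalities, so the composite contracts $S(L)$ to $\Phi(L)$ inside $S(L)$, and the family of contractions assembles into a homotopy $\Phi \circ \Theta \simeq \mathrm{id}$.

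The main obstacle will be verifying rigorously that both stages of this contraction stay inside $S(L)$ --- that shrinking staples toward their anchor points never introduces spurious crossings or combinatorial changes, and that the subsequent interpolation among small staples does not either. These are standard transversality and genericity arguments, but they must be carried out uniformly in $L$ to give a continuous family of homotopies. A final bookkeeping point is that $\Sigma(\gamma,m)$ is defined modulo the dihedral group $D_m$: since $\gamma_\Psi$ carries a distinguished combinatorial marker --- the unique top fringe loop joining $u_1$ and $v_{2n}$, together with the asymmetric adjacencies of the remaining loops --- the maps $\Phi$ and $\Theta$ can be defined compatibly with a canonical choice of base vertex and orientation, so the quotient preserves the homotopy type.
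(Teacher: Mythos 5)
Your overall strategy --- compose Mnëv's universality theorem with the reduction of Section~\ref{S:hard}, and upgrade Lemma~\ref{L:reduction} from a statement about nonemptiness to a homotopy equivalence of realization spaces by exhibiting the stapling construction as a section of the ``extract the central lines'' map --- is exactly the argument the paper intends; the paper states the theorem as a corollary of Mnëv and supplies no more detail than you do. But two concrete gaps remain. First, the non-open case: for a general (non-open) semialgebraic set, Mnëv's arrangement $\Psi$ need not be simple, and the construction of $\gamma_\Psi$, Lemma~\ref{L:fringe}, and Lemma~\ref{L:reduction} are all stated and proved only for simple arrangements. You would need to define $\gamma_\Psi$ for a wiring diagram with multiple points (the resulting curve has non-transverse self-intersections, hence is not generic), and re-prove the $4n$ vertex lower bound and both directions of Lemma~\ref{L:reduction} in that setting; your proposal never returns to this case after the first paragraph.

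Second, the fiber contraction. The fiber $\Theta^{-1}(L)$ is larger than the space of staple configurations: the $2n$ endpoints of the central edges (the polygon vertices terminating the edge that lies on each line $l_i$, which must lie beyond all crossings on that line) are also free parameters, and a general connector is an arbitrary $3$-segment path realizing the prescribed crossing pattern, not a slope-$\pm1$ staple. Your two-stage contraction has to move these endpoints as well, and ``shrink every staple radially toward its attachment points'' is not yet an argument: naive interpolation of the bend vertices can destroy the required crossings with the central segments, or introduce spurious ones, at intermediate times. What is needed is an explicit deformation retraction of the space of admissible connectors with prescribed endpoints and crossing pattern onto a point, varying continuously in $L$ and in the endpoints; this is plausible but it is the entire mathematical content of the theorem beyond Mnëv, and your proposal flags it rather than proves it. (Your handling of the $D_m$ quotient via the combinatorial asymmetry of $\gamma_\Psi$ is fine.)
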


Every generic $m$-gon is isotopic to a generic $m$-gon with integer coordinates.  In light of our reduction, results of Goodman, Pollack, and Sturmfels \cite{gps-iscm-90,gps-crotre-89} on the intrinsic spread of order types imply that some isotopy classes of polygons require an exponential number of bits to describe.  (McDiarmid and Müller \cite{mm-irdsg-13} derive a similar lower bound for intersection graphs of segments of unit disks in the plane.)

\begin{corollary}
For every integer $m$, there is an $m$-gon $P$ such that every integer $m$-gon isotopic to $P$ has diameter $2^{2^{\Omega(m)}}$.
\end{corollary}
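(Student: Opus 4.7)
The plan is to feed the doubly-exponential realizability lower bounds of Goodman, Pollack, and Sturmfels \cite{gps-iscm-90,gps-crotre-89} through the reduction of Lemma~\ref{L:reduction}. Specifically, GPS exhibit for each odd $n$ a simple stretchable pseudoline arrangement $\Psi_n$ on $n$ pseudolines such that every isomorphic arrangement of straight lines with integer defining coordinates requires those coordinates to have absolute value at least $2^{2^{\Omega(n)}}$.

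For each $m$ of the form $4n$ with $n$ odd, I would let $P$ be any $4n$-gon isotopic to the curve $\gamma_{\Psi_n}$ of Section~\ref{S:hard}, which exists by the ``only if'' direction of Lemma~\ref{L:reduction} applied to the stretchable arrangement $\Psi_n$. Given an arbitrary integer $4n$-gon $P'$ isotopic to $P$, let $H$ denote the ambient homeomorphism carrying $\gamma_{\Psi_n}$ to $P'$ and let $\Delta$ denote the diameter of $P'$. By Lemma~\ref{L:fringe}, every one of the $4n$ vertices of $P'$ is consumed by the $2n$ fringe loops (each using exactly two), and in particular \emph{no} vertex of $P'$ lies in the interior of $H(R)$. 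Consequently, each pseudoline of $\Psi_n$ is realized by a \emph{single} straight edge of $P'$ inside $H(R)$, so extending these $n$ collinear segments to full lines (exactly as in the ``only if'' construction of Lemma~\ref{L:reduction}) produces a line arrangement isomorphic to $\Psi_n$. Each such line is determined by two integer vertices of $P'$ of magnitude at most $\Delta$, and the GPS bound then forces $\Delta \ge 2^{2^{\Omega(n)}} = 2^{2^{\Omega(m)}}$.

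For values of $m$ not of the form $4n$, I would handle them by choosing the largest $4n \le m$ with $n$ odd and augmenting $\gamma_{\Psi_n}$ with a few additional ``dummy'' fringe loops outside $R$, so that the minimum-complexity isotopic polygon uses exactly $m$ vertices rather than $4n$; the extension of Lemma~\ref{L:fringe} to this padded curve then still rules out vertices inside $H(R)$, so the extraction step above goes through unchanged. The principal obstacle is the translation between GPS's original statement --- typically phrased in terms of point configurations realizing a prescribed order type --- and the line-arrangement form needed here; projective or polar duality bridges the two while losing only a polynomial factor in bit-length, which is absorbed into the $\Omega(\cdot)$ in the doubly exponential exponent. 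A second, minor concern is confirming that the line arrangement extracted from $P'$ is isomorphic to $\Psi_n$ rather than to its mirror image, but this follows from the orientation-preserving character of the ambient isotopy $H$.
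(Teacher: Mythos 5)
Your proposal is correct and matches the paper's (unwritten) intent exactly: the paper justifies this corollary only by citing Goodman--Pollack--Sturmfels together with the reduction of Section~\ref{S:hard}, and your argument --- using Lemma~\ref{L:fringe} to force all $4n$ vertices onto the fringe so that the portion of $P'$ inside $H(R)$ is an integer-coordinate line arrangement realizing $\Psi_n$, to which the GPS doubly-exponential spread bound applies --- is precisely the fleshed-out version of that sentence. The only loose end is that padding with two-vertex dummy loops reaches only even vertex counts $4n+2k$, so odd $m$ needs a slightly different gadget (or a relaxed reading of ``every integer $m$''), but this does not affect the substance of the bound.
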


\section{\dots But Not Too Hard}
\label{S:easy}

Reducing our curve-straightening problem to ETR is somewhat more subtle.  Intuitively, the reduction appears straightforward, because there is a simple efficient algorithm to verify that an $n$-crossing curve $\gamma$ and an $m$-vertex polygon~$P$ are isotopic.  First, compute all self-intersection points of $P$, and their order along each edge of $P$.  If $P$ has either more or less than $n$ self-intersections, or if any self-intersection of $P$ is degenerate (at a vertex, or on more than two edges), fail immediately.  Otherwise, fix an arbitrary basepoint $\gamma(0)$ on $\gamma$, and then for each vertex $p_i$ of $P$, check whether corresponding crossings appear in the same order and orientation along both curves, starting at $p_i$ on $P$ and starting at $\gamma(0)$ on $\gamma$.

But an efficient verification algorithm is insufficient; we require an efficient formula in the existential theory of the reals.  That is, we need to argue that the realization space of all $m$-gons isotopic to a given curve $\gamma$ is (the projection of) a semialgebraic set with  description complexity polynomial in~$n$. 

%

\subsection{Crossing Codes}
\label{SS:codes}

Our reduction assumes that the closed curve $\gamma$ is represented using a variant of encodings proposed by Gauss \cite{g-n1gs-00,g-nglzr-00}, Tait \cite{t-k1-1876}, and Dowker and Thistlethwaite \cite{dt-ckp-83}.\footnote{Specifically, Gauss \cite{g-n1gs-00,g-nglzr-00} encoded curves by assigning each crossing point a unique label, and then listing the labels in order along the curve, either with or without the corresponding sequence of signs.  Tait \cite{t-k1-1876} observed that each label in any Gauss code appears once at an even position and once in an odd position, the labels at (say) even positions can be discarded with no loss of information.  Dowker and Thistlethwaite observed that the crossing involution we are calling $\Twin$ is completely determined by the subsequence $(\Twin_1, \Twin_3, \dots, \Twin_{2n-1})$, because  $\Twin_i$ is even if and only if $i$ is odd \cite{dt-ckp-83}.}  Fix an arbitrary parametrization~$\Param\gamma$ of $\gamma$, such that the basepoint $\gamma(0)$ is not a self-intersection point.  Call $t\in S^1$ a \emph{crossing value} for $\Param\gamma$ if $\Param\gamma(t) = \Param\gamma(t')$ for some $t'\ne t$ (which must also be a crossing value).  Let $0 < t_1 < t_2 < \cdots < t_{2n} < 1$ denote the sorted sequence of crossing values.  The \EMPH{signed crossing code} of $\Param\gamma$ consists of a pair of vectors:
\begin{itemize}
\item
The first vector \textbf{$\Twin$}${}\in [2n]^{2n}$ is an involution of $[2n]$ that describes the pairing of crossing values; $\Param\gamma(t_i) = \Param\gamma(t_j)$ and $\Twin_j = i$ if and only if $j = \Twin_i$.
\item
The second vector \textbf{$\Sign$}${}\in \set{-1, +1}^{2n}$ indicates the direction of each crossing; $\Sign_i = +1$ if and only if, for arbitrarily small $\e$, the subpath $\Param\gamma[t_i-\e, t_i+\e]$ crosses the subpath $\Param\gamma[t_{\Twin_i}-\e, t_{\Twin_i}+\e]$ from right to left.
\end{itemize}
Each (unparametrized) closed curve $\gamma$ has up to $4n$ different signed crossing codes, depending on the choice of basepoint and direction.

A signed crossing code of a curve determines its embedding \emph{on the sphere} \cite{c-cic-91}, but not necessarily \emph{in the plane}, because the code does not specify an outer face.  To remove this ambiguity, we declare that the outer face lies immediately to the right of the curve near the basepoint.  Thus, to encode a planar curve, we must first choose a basepoint on the outer face and a direction that puts the outer face to the right of that basepoint.  Figure \ref{F:codes} shows a closed curve whose image graph has three arcs on the outer face, each leading to a different signed crossing code.

\begin{figure}[ht]
\centering
\includegraphics[scale=0.5]{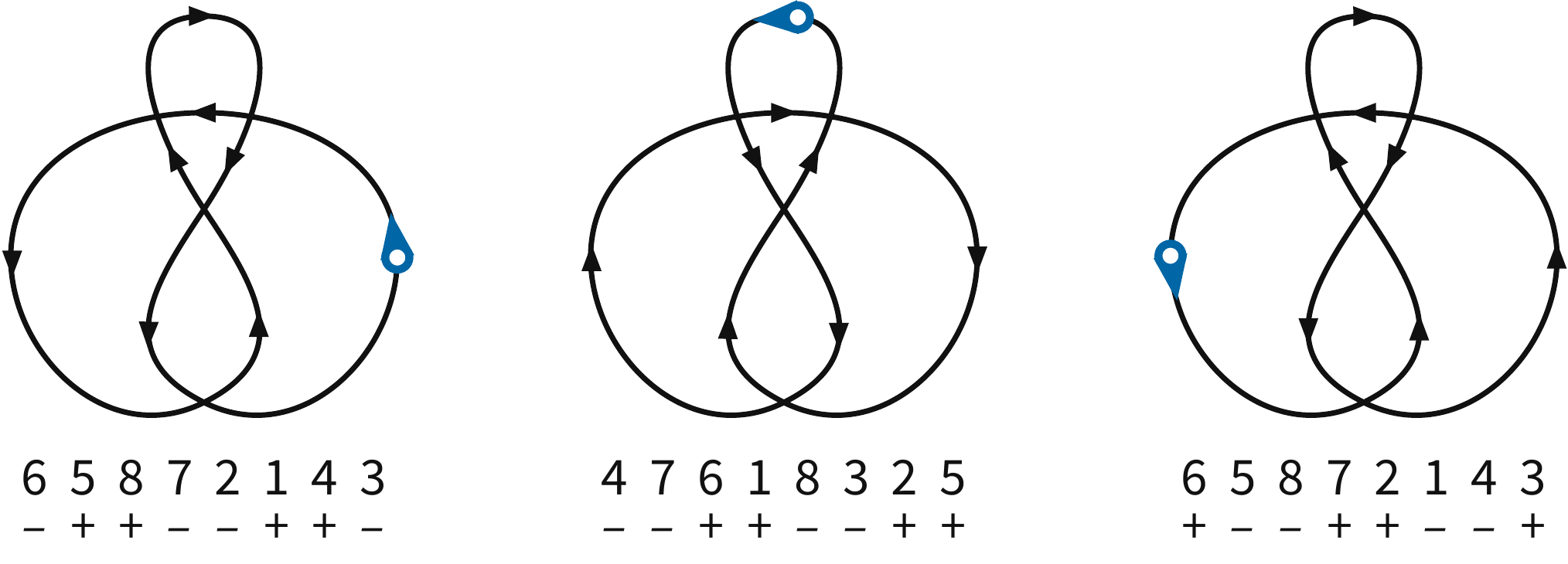}
\caption{A closed curve with exactly three valid signed crossing codes; compare with Figure \ref{F:edgecodes}.}
\label{F:codes}
\end{figure}

In particular, to obtain a valid signed crossing sequence for a generic polygon $P$, it suffices to choose any vertex on the convex hull of $P$ as the basepoint vertex $p_1$, and direct the polygon so that the triple $(p_m, p_1, p_2)$ is oriented counterclockwise.  (Different convex hull vertices may yield the same crossing code, and some valid crossing codes for $P$ may require a basepoint that is not a convex hull vertex, or not a vertex at all.)

We also associate with every generic polygon $P$ an \EMPH{edge code}, which records which edges of $P$ contain each self-crossing.  The edge code is a vector $\EMPH{edge}\in[m]^{2n}$, where for each index $i$, the $i$th crossing along~$P$ occurs at the point $p_e p_{e+1} \cap p_{e'} p_{e'+1}$, where $e = \Edge_i$ and $e' = \Edge_{\Twin_i}$.  The vector $\Edge$ is (weakly) sorted; for each index $1\le i< 2n$, we have $\Edge_i \le \Edge_{i+1}$.

Figure \ref{F:edgecodes} shows a generic polygon $P$ whose convex hull has four vertices; along with the signed crossing and edge codes that result from choosing each convex hull vertex as the basepoint.  Each signed crossing code is also a valid signed crossing code for the curve $\gamma$ in Figure \ref{F:codes} (but not vice versa); it follows that $\gamma$ and $P$ are isotopic.

\begin{figure}[ht]
\centering
\includegraphics[width=\textwidth]{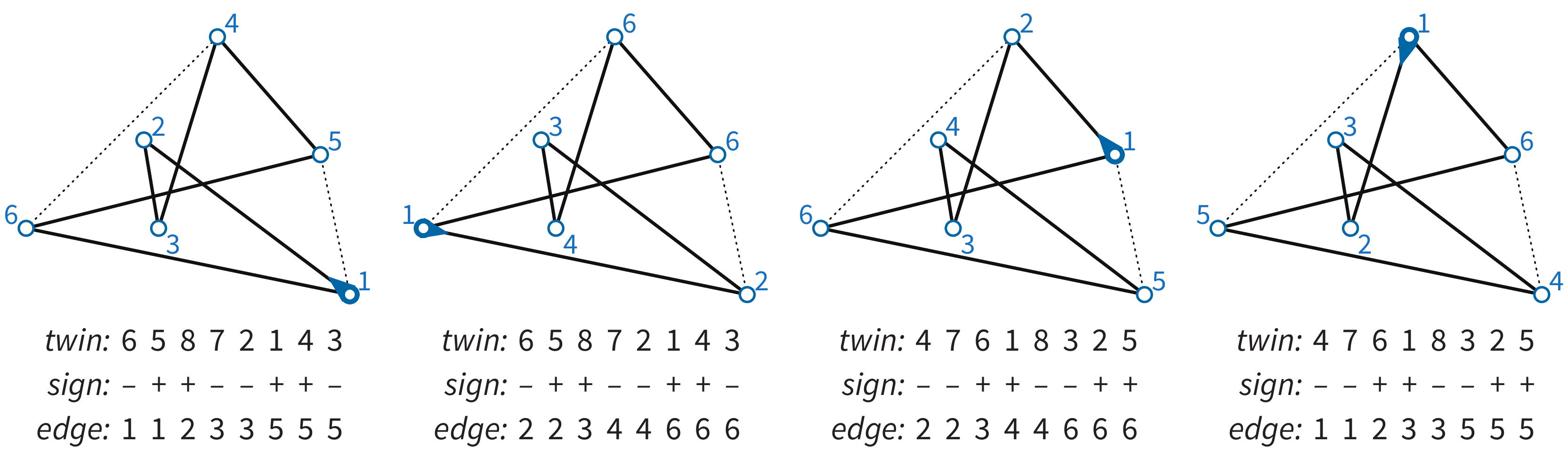}
\caption{All valid signed crossing and edge codes for the same generic polygon; compare with Figure \ref{F:codes}.}
\label{F:edgecodes}
\end{figure}

\subsection{Overview}

For the remainder of this section, we use modular index arithmetic implicitly; in particular, $p_{m+1} = p_1$.  All indexed conjunctions ($\Land$), disjunctions ($\Lor$), and summations ($\sum$) are notational shorthand; each term in these expressions appears explicitly in the actual formula.  For example, the indexed disjunction
\[
	\Land_{i=1}^{2n-1} (\Edge_{i} \le \Edge_{i+1})
\]
is notational shorthand for the explicit disjunction
\[
	(\Edge_1 \le \Edge_2) \land (\Edge_2 \le \Edge_3) \land \cdots \land (\Edge_{2n-1} \le \Edge_{2n})
\]

For any closed curve $\gamma$ and any integer $m$, we describe an ETR sentence $\textsc{IsotopicToPolygon}(\gamma, m)$ that is true if and only if some generic $m$-gon is isotopic to $\gamma$.  Lemma \ref{L:crude-upper} implies that we can assume without loss of generality that $m = O(n)$.  Our sentence has the following high-level structure:
\begin{align*}
	\textsc{IsotopicToPolygon}(\gamma, m)
	~\equiv~
	&\exists P\in\Real^{2m}\colon \exists \Edge\in\Real^{2n} \colon
	\\ & \quad
	\Lor_{(\Twin,\Sign)\sim\gamma}
	\textsc{CodedPolygon}(P, \Twin, \Sign, \Edge)
\end{align*}
Here $(\Twin,\Sign)$ ranges over all valid signed crossing codes of the input curve $\gamma$; in each term of this disjunction, the vectors $\Twin$ and $\Sign$ are hard-coded constants.  The polygon $P$ is represented by its vector $(x_1, y_1,\allowbreak x_2, y_2, \dots,\allowbreak x_m, y_m)$ of vertex coordinates; for notational convenience, we write $p_i = (x_i,y_i)$.  Crudely, the predicate $\textsc{CodedPolygon}(P, \Twin,\allowbreak \Sign, \Edge)$ states that $P$ is a generic polygon with valid signed crossing code $(\Twin, \Sign)$ and edge code $\Edge$.  Although in principle we could compute the edge code of $P$ from its vertex coordinates, we use the inherent nondeterminism in the ETR model to “guess” the edge code, which simplifies our formula.

In more detail, the predicate $\textsc{CodedPolygon}(P, \Twin, \Sign, \Edge)$ certifies the following conditions:
\begin{itemize}\itemsep0pt
\item
$P$ is a generic polygon; $p_1$ is the rightmost vertex of $P$, and the triple $(p_m,p_1,p_2)$ is oriented counterclockwise.
\item
The vertex coordinates of $P$ describe a polygon with exactly $n$ self-crossings.
\item
The vector $\Edge$ is a well-formed edge code.
\item
The crossings occur on the correct edges, with the correct signs; specifically, for each index $i$, the edges with index $\Edge_i$ and $\Edge_{\Twin_i}$ cross with sign $\Sign_i$.
\item
Finally, the crossings along each edge of $P$ occur in the correct order.
\end{itemize}
We describe subformulas for each of these conditions in the following subsections.
\begin{align*}
	\textsc{CodedPolygon}&(P, \Twin, \Sign, \Edge) ~\equiv~
	\\&  	\emph{GoodPolygon}(P) ~\land~ (\emph{NumCrossings}(P) = n) ~\land~
			\emph{WellFormed}(\Edge) ~\land~
  	\\&		\emph{CrossingSigns}(P, \Twin, \Sign, \Edge) ~\land~
 			\emph{CrossingOrder}(P, \Twin, \Edge)
\end{align*}

\subsection{Good Polygon}

To avoid boundary cases elsewhere in our formula, we require that (1) no two vertices of $P$ have the same $x$-coordinate, (2) no three vertices of $P$ are collinear, (3) no two edges of $P$ are parallel, and (4)~no three edges of $P$ lie on concurrent lines.  These conditions, which can be enforced if necessary by arbitrarily small vertex perturbations, imply that all $m$ vertices of $P$ are distinct; no edge of~$P$ is vertical; no two edges of $P$ are collinear; no vertex of $P$ lies in the interior of an edge; and at most two edges intersect at any point.  To establish a valid signed crossing code, we also require that $p_1$ is the rightmost vertex of $P$, and the triple $(p_m,p_1,p_2)$ is oriented counterclockwise.

The signed area of the triangle $(p_i, p_j, p_k)$ is given by the following determinant:
\[
	\Delta(i, j, k)
	~\equiv~
			\begin{vmatrix}
				1 & x_i & y_i \\
				1 & x_j & y_j \\
				1 & x_k & y_k
			\end{vmatrix}
	~\equiv~
		(x_i-x_k)(y_j-y_k) - (y_i-y_k)(x_j-x_k)	
\]
This signed area is positive if the triple $(p_i, p_j, p_k)$ is oriented counterclockwise, negative if the triple is oriented clockwise, and zero if the three points are collinear. Edges $p_ip_{i+1}$ and $p_j p_{j+1}$ lie on parallel lines (or one of the edges has length $0$) if and only if 
\begin{align*}
	\emph{parallel}(i,j) &~\equiv~
		\begin{vmatrix}
			x_{i+1} - x_i & y_{i+1} - y_i \\
			x_{j+1} - x_j & y_{j+1} - y_j
		\end{vmatrix}
		= 0.
\end{align*}
Similarly,  edges $p_ip_{i+1}$ and $p_j p_{j+1}$ and $p_k p_{k+1}$ lie on concurrent lines (or some pair is parallel) if and only if
\[
	\emph{concurrent}(i,j,k) ~\equiv~
		\begin{vmatrix}
			x_{i+1} - x_i & y_{i+1} - y_i & x_{i+1}y_i - x_iy_{i+1} \\
			x_{j+1} - x_j & y_{j+1} - y_j & x_{j+1}y_j - x_jy_{j+1} \\
			x_{k+1} - x_j & y_{k+1} - y_j & x_{k+1}y_j - x_ky_{k+1} \\
		\end{vmatrix}
		= 0.
\]
Our overall general position expression has length $\Theta(m^3)$:
\begin{align*}
	\emph{GoodPolygon}(P) ~\equiv~ & 
			\left(\Land_{i=1}^{m} (x_i \ne x_{i+1})\right)
			~\land~
			\left(\Land_{i=2}^{m} (x_1 \le x_i)\right)
			~\land~
			\big(\Delta(m, 1, 2) > 0\big)
	\\&		{}\qquad~\land~	\left(\Land_{i=1}^{m-1}\Land_{j=i+1}^{m}
							\lnot\emph{parallel}(i,j)\right)
	\\&		{}\qquad~\land~ \left(
				\Land_{i=1}^{m-2}\Land_{j=i+1}^{m-1}\Land_{k=j+1}^{m}
					\big((\Delta(i, j, k) \ne 0)
							~\land~ \lnot\emph{concurrent}(i,j,k)\big)\right)
\end{align*}

\subsection{Number of Crossings}

We can determine whether two edges $p_ip_{i+1}$ and $p_jp_{j+1}$ cross in their interiors by checking the orientations of all four triples of endpoints:
\begin{align*}
	\Cross(i,j) ~\equiv~ & 
		\big(\Delta(p_i,p_j,p_{j+1}) \cdot \Delta(p_{i+1},p_j,p_{j+1}) < 0\big)
	\\&
	\quad\land~
		\big(\Delta(p_i,p_{i+1},p_j) \cdot \Delta(p_i,p_{i+1},p_{j+1}) < 0\big)
\end{align*}
We count crossing edge pairs by defining new indicator variables $X_{ij}$ that record which pairs of edges cross, and then summing these indicator variables by brute force.  (These new variables are existentially quantified at the start of the top-level sentence.)
\begin{align*}
	\big(\emph{NumCrossings}(P)=n\big) ~\equiv~ &
	\left(\Land_{i=1}^{m-2}\Land_{j=i+2}^{m}
		\Big((X_{ij} = 1) \land \Cross(i,j)\big)
		~\lor~
		\Big((X_{ij} = 0) \land \lnot\Cross(i,j)\big)
 \right)
	\\&
	\qquad\land~
	\left(\Sum_{i=1}^{m-2}\Sum_{j=i+2}^{m} X_{ij} = n\right)
\end{align*}

\subsection{Well-Formed Edge Code}
The vector $\Edge$ is a well-formed edge code if and only if its is monotonically non-decreasing and each coordinate $\Edge_i$ is an integer between $1$ and $m$.
\[
	\emph{WellFormed}(\Edge) ~\equiv~
		\left(\Land_{i=1}^{2n-1} (\Edge_i \le \Edge_{i+1})\right)
		~\land~
		\left(\Land_{i=1}^{2n} ~ \Lor_{e=1}^m (\Edge_i = e)\right)
\]

\subsection{Crossing Signs}

The following refinement to our earlier crossing primitive states that the directed edge $p_i\arcto p_{i+1}$ crosses the directed edge $p_j\arcto p_{j+1}$ from right to left, defining a positive crossing:
\begin{align*}
	\Cross^+(i,j) ~\equiv~ & 
		\big(\Delta(p_i,p_j,p_{j+1}) < 0\big)
		~\land~
		\big(\Delta(p_{i+1},p_j,p_{j+1}) > 0\big)
	\\&
	\quad\land~
		\big(\Delta(p_i,p_{i+1},p_j) > 0\big)
		~\land~
		\big(\Delta(p_i,p_{i+1},p_{j+1}) < 0\big)
\end{align*}
We can indicate negative crossings similarly:
\begin{align*}
	\Cross^-(i,j) ~\equiv~ & 
		\big(\Delta(p_i,p_j,p_{j+1}) > 0\big)
		~\land~
		\big(\Delta(p_{i+1},p_j,p_{j+1}) < 0\big)
	\\&
	\quad\land~
		\big(\Delta(p_i,p_{i+1},p_j) < 0\big)
		~\land~
		\big(\Delta(p_i,p_{i+1},p_{j+1}) > 0\big)
\end{align*}
We easily verify that $\Cross^-(i,j) = \Cross^+(j,i)$.  The identity $\Cross(i,j) = \Cross^+(i,j) \lor \Cross^-(i,j)$ follows from the fact that if two segments cross, their endpoints alternate around their convex hull.
\begin{figure}[ht]
\centering\includegraphics[scale=0.5]{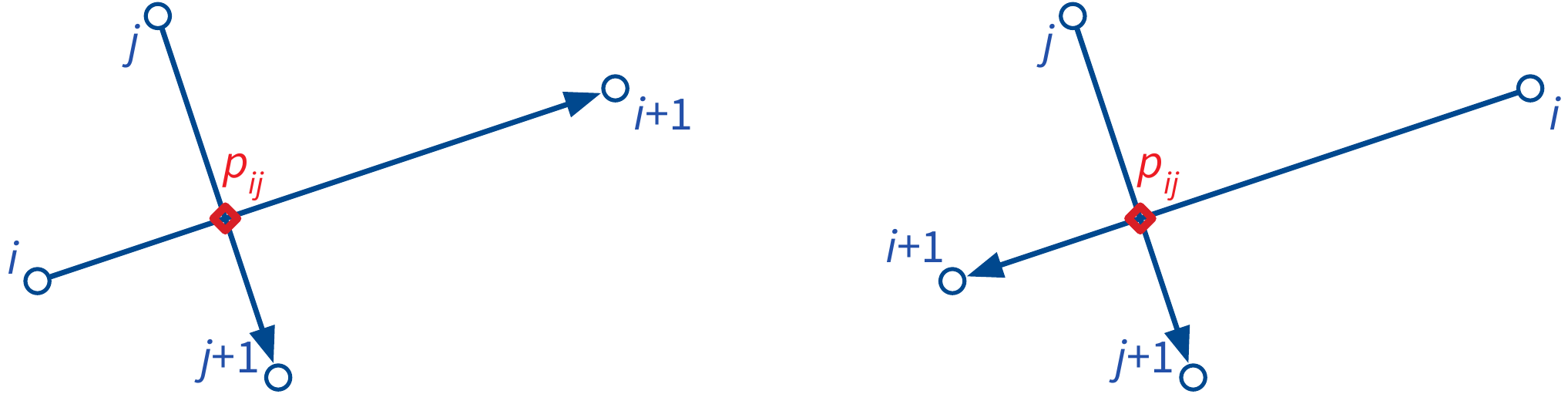}
\caption{Left: $\Cross^+(i,j) = \Cross^-(j,i)$.  ~Right: $\Cross^-(i,j) = \Cross^+(j,i)$}
\end{figure}

At this point we would like to define
\begin{align*}
	\emph{CrossingSigns}(P, \Twin, \Sign, \Edge)
	~\equiv~ 
		\Land_{i=1}^{2n}
			\bigg(&\big((\Sign_i = +1) ~\land~ \Cross^+(\Edge_i, \Edge_{\Twin_i})\big)^{\strut}
	\\&		~\lor~
			\big((\Sign_i = -1) ~\land~ \Cross^-(\Edge_i, \Edge_{\Twin_i})\big)\bigg).
\end{align*}
Unfortunately, even after expanding this expression into an explicit conjunction of $2n$ clauses, we are using the \emph{variables} $\Edge_i$ and $\Edge_{\Twin_i}$ in the $i$th clause as \emph{indices} inside the functions $\Cross^+$ and $\Cross^-$.  To simulate this indirection, we expand the offending subexpressions into disjunctions as follows:  
\begin{align*}
	\Cross^\pm(\Edge_i, \Edge_{\Twin_i})
	&~\equiv~
	\Lor_{e = 1}^{m}
	\Lor_{et = 1}^{m}
		\left((e = \Edge_i) ~\land~ (et = \Edge_{\Twin_i}) ~\land~ \Cross^\pm(e, et)\right)
\end{align*}
Again, this indexed disjunction is shorthand for an explicit disjunction of $\Theta(m^2)$ terms; within each term
\[
	\left((e = \Edge_i) \land (et = \Edge_{\Twin_i}) \land \Cross^\pm(e, et)\right)
\]
the crossing indices $i$ and $\Twin_i$ and the edge indices $e$ and $et$ are explicit constants, so all variables, including the coordinate variables in the subexpression $\Cross^+(e, et)$, are identifiable at “compile time”.

Our final expression $\emph{CrossingSigns}(P, \Twin, \Sign, \Edge)$ has length $\Theta(m^2 n)$.  As a side effect, this expression guarantees that every pair of polygon edges that is supposed to cross actually does.  Moreover, because we have already verified that the correct \emph{number} of edge-pairs cross, it follows that \emph{only} the edge pairs that are supposed to cross actually do.

\subsection{Crossing Order}

Finally, we need to ensure that the crossings along each edge of $P$ appear in the correct order.
For all indices $i\ne j$, let $p_{ij} = (x_{ij}, y_{ij})$ denote the intersection point of the lines $\Line{p_i p_{i+1}}$ and $\Line{p_j p_{j+1}}$; our general position assumptions imply that these points are well-defined.  Of course we could compute these points explicitly, but it is simpler to define them implicitly by the clauses
\[
	(\Delta(p_{ij}, p_i, p_{i+1}) = 0) ~\land~ (\Delta(p_{ij}, p_j, p_{j+1}) = 0).
\]

For any three indices $i$, $j$, and $k$, the following expression states that the endpoint $p_i$ and the intersection points $p_{ij}$ and $p_{ik}$ appear in order along the line $\smash{\Line{p_i p_{i+1}}}$:
\begin{align*}
	\emph{OrderedP}(i,j,k)
	~\equiv~
		\begin{cases}
			\textsc{False} & \text{if $i=j$ or $i=k$} \\
			(x_i - x_{ij})(x_{ij} - x_{ik}) > 0 & \text{otherwise}
		\end{cases}
\end{align*}
(Recall that our general position assumption implies that $x_i \ne x_{i+1}$.)

\begin{figure}[ht]
\centering
\includegraphics[scale=0.5]{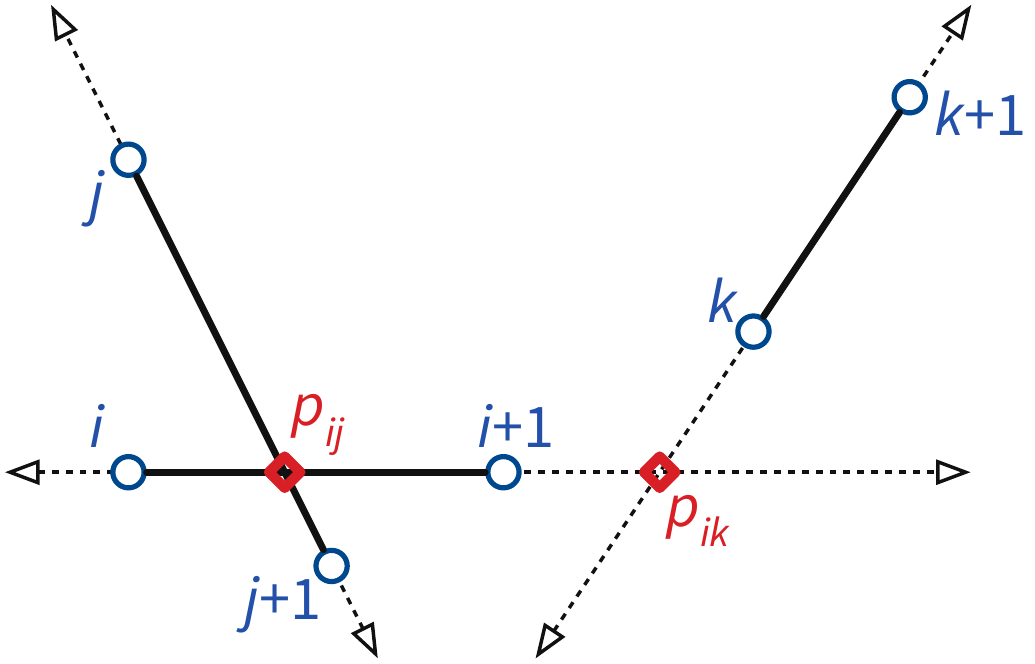}
\caption{$\emph{OrderedP}(i,j,k)$} 
\end{figure}

We need to express this relation in terms of \emph{crossing} indices, instead of \emph{edge} indices.  The following expression states that the $i$th and $(i+1)$st crossings are correctly ordered, unless they appear on different edges.
\[
	\emph{OrderedX}(i)
	~\equiv~
		(\Edge_i \ne \Edge_{i+1}) ~\lor~
			\emph{OrderedP}(\Edge_i, \Edge_{\Twin_i}, \Edge_{\Twin_{i+1}})
\]
Our earlier crossing-sign expressions guarantee that the relevant \emph{edges} actually intersect, and therefore intersect in the correct order.  As before, we expand this expression to simulate indirection, as follows.
\begin{multline*}
	\emph{OrderedP}(\Edge_i, \Edge_{\Twin_i}, \Edge_{\Twin_{i+1}})
\\	
	\equiv~ 
		\Lor_{e=1}^m ~ \Lor_{et=1}^m ~ \Lor_{et'=1}^m
			 \big( (e = \Edge_i) 
					~\land~ (et = \Edge_{\Twin_i}) 
					~\land~ (et' = \Edge_{\Twin_{i+1}})
					~\land~ \emph{OrderedP}(e, et, et')\big)
\end{multline*}
Our final ordering expression 
\[
	\emph{CrossingOrder}(P, \Twin, \Edge) ~\equiv~
		\Lor_{i=1}^{2n-1} \emph{OrderedX}(i)
\]
has length $\Theta(nm^3)$.

\begin{theorem}
For any generic curve $\gamma$ with $n$ self-intersection points and any integer $m = O(n)$, there is a sentence of length $\Theta(nm^3) = O(n^4)$ in the existential theory of the reals that is true if and only if $\gamma$ is isotopic to a generic polygon with $m$ vertices.
\end{theorem}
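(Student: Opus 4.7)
The theorem is an assembly result: the subformulas \emph{GoodPolygon}, \emph{NumCrossings}, \emph{WellFormed}, \emph{CrossingSigns}, and \emph{CrossingOrder} from Sections 4.3--4.7 have already been defined and individually justified, so the plan is to certify that their conjunction under the top-level template of Section 4.2 captures isotopy exactly, and that the totals add up as claimed.

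For correctness in the forward direction, take any generic $m$-gon $P$ isotopic to $\gamma$ (which exists by Lemma 1.1 once $m$ meets the crude $O(n)$ upper bound; for the smaller admissible values of $m$ we use the $P$ whose existence is the hypothesis to be discharged). After a small generic perturbation, choose the rightmost vertex of $P$ as $p_1$ and orient so that $(p_m,p_1,p_2)$ is counterclockwise; since $p_1$ lies on the outer face, the resulting signed crossing code of $P$ is one of the valid codes $(\Twin,\Sign)$ of $\gamma$ that appear in the outer disjunction. Reading the vector $\Edge$ off $P$ and setting each indicator $X_{ij}$ accordingly satisfies the corresponding disjunct clause by clause.

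The reverse direction carries the substantive content. The core claim is that whenever \emph{CodedPolygon}$(P,\Twin,\Sign,\Edge)$ holds, the polygon $P$ has signed crossing code $(\Twin,\Sign)$ relative to basepoint $p_1$, so by the encoding theorem for planar closed curves (Gauss--Tait--Dowker--Thistlethwaite, with the outer-face pinning from Section 4.1 promoting the code from a spherical invariant to a planar one) $P$ is isotopic to any curve carrying the same code, and in particular to $\gamma$. I would establish this by walking through the conjuncts: \emph{GoodPolygon} makes the crossing structure combinatorial and unambiguous; \emph{NumCrossings}$=n$ forces a bijection between crossings of $P$ and crossings of $\gamma$; \emph{WellFormed} and \emph{CrossingSigns} force this bijection to realize $\Twin$ with signs $\Sign$ (the clause is essentially tight because the total crossing count has already been pinned to $n$, so the pairs forced to cross are exactly the crossing pairs); and \emph{CrossingOrder} forces the cyclic order of crossings along each edge of $P$ to be the order prescribed by $\Twin$. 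Together these fix the signed crossing code of $P$.

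For length, I would add the subformula sizes reported in the subsections: $\Theta(m^3)+\Theta(m^2)+\Theta(nm)+\Theta(nm^2)+\Theta(nm^3) = \Theta(nm^3)$. The outer disjunction ranges over the valid signed crossing codes of $\gamma$; since only \emph{CrossingSigns} and \emph{CrossingOrder} depend on $(\Twin,\Sign)$, I would factor out the shared clauses (\emph{GoodPolygon}, \emph{NumCrossings}, \emph{WellFormed}) and absorb the remaining code-dependent contribution into the $\Theta(nm^3)$ bound. Substituting $m=O(n)$ yields $O(n^4)$. The principal obstacle is the encoding theorem invoked in the reverse direction: though classical, stating it carefully with the outer-face anchor (so that isotopy is in the plane, not merely on the sphere) and then chaining it with the fact that our conjuncts genuinely determine the code is the most delicate step.
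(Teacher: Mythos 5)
Your proposal is correct and follows essentially the same route as the paper, which states this theorem as the direct summary of the Section~4 construction without a separate proof environment: both directions of your correctness argument (perturb, anchor the basepoint at a convex-hull vertex to read off a valid planar signed crossing code, and conversely invoke the outer-face--anchored Gauss/Dowker--Thistlethwaite recognition of the isotopy class, with the tightness of \emph{CrossingSigns} coming from the crossing count already being pinned to $n$) and your term-by-term length accounting match the paper's implicit argument. One caution: your plan to handle the outer disjunction by factoring out the code-independent clauses does not actually control the length, because the code-dependent subformulas \emph{CrossingSigns} and \emph{CrossingOrder} are the dominant terms, so with up to $\Theta(n)$ valid codes the literal formula has length $\Theta(n^2m^3)$ rather than $\Theta(nm^3)$ --- but the paper elides this same factor, so this is a shared quibble rather than a defect of your reconstruction.
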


\begin{corollary}
\textsc{CurveToPolygon} is $\exists\Real$-complete.
\end{corollary}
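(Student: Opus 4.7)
The plan is to combine the two results already proved in Sections~\ref{S:hard} and~\ref{S:easy}. The $\exists\Real$-hardness of \textsc{CurveToPolygon} was established earlier via a polynomial-time reduction from the stretchability problem for simple pseudoline arrangements (Lemma~\ref{L:reduction}), which is itself $\exists\Real$-hard by Mnëv's universality theorem. All that remains for completeness is to exhibit containment in $\exists\Real$, i.e., a polynomial-time many-one reduction from \textsc{CurveToPolygon} to ETR.

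Membership follows almost immediately from the preceding theorem. Given input $(\gamma, m)$ where $\gamma$ has $n$ self-intersections, I would first compare $m$ against the explicit $O(n)$ bound from Lemma~\ref{L:crude-upper}: if $m$ exceeds that bound, the answer is automatically \emph{yes}, and the reduction outputs a trivially satisfiable sentence such as $\exists x\colon x = 0$. Otherwise $m = O(n)$, and I would output $\textsc{IsotopicToPolygon}(\gamma, m)$, which the preceding theorem constructs in polynomial time and whose length is $O(n^4)$. Both branches clearly run in polynomial time, and the constructed sentence is true if and only if the input instance is a yes-instance.

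The one subtle point, and what I expect to be the only real obstacle, is that $\textsc{IsotopicToPolygon}(\gamma, m)$ certifies only the existence of a \emph{generic} $m$-gon isotopic to $\gamma$, whereas \textsc{CurveToPolygon} as stated allows arbitrary $m$-gons. This gap is harmless: exactly as in the proof of Lemma~\ref{L:crude-upper}, any non-generic $m$-gon isotopic to~$\gamma$ can be converted into a generic $m$-gon isotopic to $\gamma$ by an arbitrarily small independent perturbation of its vertex coordinates, which preserves the isotopy class and removes all degeneracies with probability~$1$. Consequently, some $m$-gon is isotopic to $\gamma$ if and only if some generic $m$-gon is, and the formula from the preceding theorem is a faithful many-one reduction to ETR, completing the argument.
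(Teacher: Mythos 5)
Your proof is correct and follows exactly the route the paper intends: the corollary is immediate from combining the $\exists\Real$-hardness theorem of Section~\ref{S:hard} with the ETR sentence of Section~\ref{S:easy}. You are in fact slightly more careful than the paper, which leaves implicit both the monotonicity argument justifying the reduction to the case $m = O(n)$ and the perturbation argument closing the gap between arbitrary and generic $m$-gons.
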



\bibliographystyle{newuser-doi}
\bibliography{topology,compgeom,optimization,algorithms,lowerbounds,jeffe}

\end{document}